\newtheorem{theorem}{Theorem}
\newtheorem{lemma}{Lemma}
\newtheorem{corollary}{Corollary}
\def\btheta{\mbox{\boldmath $\theta$}}
\def\bepsilon{\mbox{\boldmath $\epsilon$}}
\def\bSigma{\mbox{\boldmath $\Sigma$}}
\def\bw{\mbox{\boldmath $\beta$}}
\def\bR{\mathbb{R}}
\def\mE{\mathbb{E}}
\def\bP{\mathbb{P}}
\def\mQ{\mathbb{Q}}
\def\bnu{\mbox{\boldmath $\nu$}}
\def\ba{{\bf a}}
\def\bb{{\bf b}}
\def\bQ{{\bf Q}}
\def\bw{{\bf w}}
\def\bs{{\bf s}}
\def\bu{{\bf u}}
\def\bv{{\bf v}}
\def\bI{{\bf I}}
\def\bx{{\bf x}}
\def\by{{\bf y}}
\def\bz{{\bf z}}
\def\nn{\nonumber}
\def\v2{\vspace{0.2in}}
\numberwithin{equation}{section}
\newcommand{\std}[1]{\scriptsize\rm$\pm$#1}
\begin{document}

\title{A Closed-Form Framework for Schrödinger Bridges Between Arbitrary Densities}

\author{Hanwen Huang  \\\\
     {\it Department of Biostatistics, Data Science and Epidemiology}\\
    {\it Medical College of Georgia}\\
    {\it Augusta University, Augusta, GA 30912}\\
HHUANG1@augusta.edu}

\date{}

\maketitle

\begin{abstract}
Score-based generative models have recently attracted significant attention for their ability to generate high-fidelity data by learning maps from simple Gaussian priors to complex data distributions. A natural generalization of this idea to transformations between arbitrary probability distributions leads to the Schrödinger Bridge (SB) problem. However, SB solutions rarely admit closed-form expressions and are commonly obtained through iterative stochastic simulation procedures, which are computationally intensive and can be unstable. In this work, we introduce a unified closed-form framework for representing the stochastic dynamics of SB systems. Our formulation subsumes previously known analytical solutions—including the Schrödinger–Föllmer process and the Gaussian SB—as specific instances. Notably, the classical Gaussian SB solution, previously derived using substantially more sophisticated tools such as Riemannian geometry and generator theory, follows directly from our formulation as an immediate corollary.  Leveraging this framework, we develop a simulation-free algorithm that infers SB dynamics directly from samples of the source and target distributions. We demonstrate the versatility of our approach in two settings: (i) modeling developmental trajectories in single-cell genomics and (ii) solving image restoration tasks such as inpainting and deblurring. This work opens a new direction for efficient and scalable nonlinear diffusion modeling across scientific and machine learning applications.
\end{abstract}

\begin{keywords}
Diffusion process; Generative model; Neural network; Single-cell; Stochastic differential equation
\end{keywords}

\section{Introduction}
Diffusion-based generative models have seen rapid development and broad application across machine learning domains, including computer vision \citep{sohldickstein2015deep}, image analysis \citep{wang2021geometry,pu2015deep}, natural language processing \citep{inproceedings,yao2019DGM4NLP}, and drug discovery \citep{16272714120230328}. These models construct a stochastic process that transforms a simple reference distribution, such as a multivariate Gaussian, into a complex data distribution. A central class within this family is score-based generative models (SGMs) \citep{sohldickstein2015deep,ho2020denoising,song2021scorebased,song2021maximum}, which currently achieve state-of-the-art sample quality across multiple domains. SGMs couple two stochastic processes: a forward diffusion process that gradually corrupts data into noise, and a reverse denoising process that reconstructs data from noise. Both can be described as stochastic differential equations whose endpoints match the data and prior distributions. 

While SGMs excel at data synthesis, many scientific applications—such as single-cell analysis—require learning the stochastic dynamics that interpolate between two observed distributions, rather than generating samples from noise. In these settings, the goal is to infer the most likely probabilistic paths connecting initial and terminal state distributions. This has led to growing interest in the Schrödinger Bridge (SB) framework \citep{bernton2019schrodinger,chen2023likelihood,debortoli2023diffusion,shi2023diffusion,chen2023schrodinger,liu2023schrodinger,pmlr-v206-stromme23a}, which generalizes diffusion-based models. SB seeks a stochastic process that evolves between two marginals while staying close to a reference process, typically Brownian motion.

In general, SB problems lack closed-form solutions except in special cases (e.g., both marginals Gaussian \citep{bunne2023schrodingerbridgegaussianmeasures}). As a result, most methods rely on iterative numerical approximations \citep{debortoli2023diffusion,chen2023likelihood}, which can be unstable and scale poorly in high dimensions \citep{shi2023diffusion}. To overcome these computational limitations, we propose a closed-form framework for learning the stochastic dynamics between two arbitrary distributions. We derive, for the first time, an explicit analytical relation between the dynamic formulation of the SB problem and the solution to its corresponding static problem. Specifically, we show that the solutions of the SB system admit analytic expressions that depend solely on integrations with respect to the endpoint joint distributions. Remarkably, this result yields a unifying perspective that encompasses all previously known closed-form SB solutions, including the Schrödinger–Föllmer process—which transports a degenerate Dirac delta distribution at time zero to a target distribution at time one—and the Gaussian Schrödinger Bridge, which transports one multivariate Gaussian to another, as special cases.

Establishing this closed-form link between the dynamic SB formulation and static endpoint densities enables an efficient and tractable computation of interpolation dynamics, implemented as Algorithm 1, without resorting to iterative proportional fitting or neural network–based regression. For paired data, Algorithm 1 can be directly applied to compute the SB drift. For unpaired data, paired samples can first be generated using an optimal transport algorithm and then used within Algorithm 1 to recover the stochastic dynamics. Furthermore, our formulation admits an equivalent variational (minimization) representation, from which we derive a simulation-free algorithm, implemented as Algorithm 2, for inferring stochastic dynamics directly from samples drawn from arbitrary source and target distributions. This method provides a training-free framework that avoids the computational complexity and instability often associated with iterative, training-based approaches. Moreover, it generalizes and unifies several existing simulation-free formulations—including those proposed in \cite{liu2023schrodinger,tong2024}—as special cases.

Beyond SB-based methods, several other approaches have recently been proposed for learning nonlinear transformations between arbitrary distributions. A prominent class among them is the flow-based models, or continuous normalizing flows (CNFs), which assume a deterministic continuous-time generative process governed by an ordinary differential equation (ODE) that transports the source density to the target density. Similar to score-based generative models, CNFs admit closed-form solutions only when the source distribution is Gaussian. For arbitrary source distributions, their applicability has been limited by simulation-based training objectives that require expensive numerical integration of the ODE during training. Recent advances have introduced simulation-free training objectives that improve the efficiency of CNFs and make them competitive with SGMs. However, these methods still require training neural networks and, more importantly, do not extend to stochastic dynamics, which are essential both for generative modeling and for recovering the underlying dynamics of complex systems.

Our specific contributions are summarized as follows:
\begin{itemize}
\item We develop a closed-form framework for studying Schrödinger Bridge (SB) systems that connect arbitrary probability distributions. This framework unifies and generalizes all previously known closed-form SB solutions, including the Schrödinger–Föllmer process and the Gaussian Schrödinger Bridge, as special cases. In particular, the well-known closed-form solution to the Gaussian Schrödinger Bridge problem—previously derived using considerably more sophisticated tools such as Riemannian geometry and generator theory in \cite{bunne2023schrodingerbridgegaussianmeasures}—emerges as an immediate corollary of our framework (see Corollary 1). To the best of our knowledge, this constitutes the most general closed-form result to date in the line of research on analytical solutions to the SB problem.
\item Building on this framework, we introduce a training-free algorithm (Algorithm 1) for learning stochastic dynamics from both paired and unpaired datasets. This one-step method eliminates the need for neural network training and the associated challenges of architecture selection, making it computationally efficient and easy to implement. The algorithm proves particularly effective for image restoration and modeling the temporal evolution of single-cell genomic data.
\item Through an equivalent variational formulation, we further derive a simulation-free algorithm (Algorithm 2) for inferring stochastic dynamics directly from samples drawn from arbitrary source and target distributions. Unlike prior simulation-free approaches—such as those in \cite{liu2023schrodinger,tong2024}—which restrict the reference process to standard Brownian motion, our formulation accommodates arbitrary Itô processes as references. As demonstrated in Section~\ref{numerical}, careful selection of the reference process can lead to substantial performance gains.
\end{itemize}

The remainder of the paper is organized as follows. Section \ref{method} introduces the Schrödinger Bridge diffusion process and presents a tractable formulation that can be used to derive explicit closed-form solutions to the Schrödinger systems. Section \ref{setting} develops two algorithms: a training-free algorithm (Algorithm 1) in Section \ref{setting1}, derived directly from our main theorem, and a simulation-free algorithm (Algorithm 2) in Section \ref{setting2}, based on an equivalent variational representation of the same result. Section \ref{numerical} presents numerical experiments evaluating the performance of our methods on both simulated low-dimensional data and real-world high-dimensional image datasets. Finally, Section~\ref{conclusion} provides concluding remarks, and  the proofs of the main theoretical results are included in the supplimentary material.

\section{A closed-form framework for solving Schr\"{o}dinger Bridge problem}\label{method}

In this section, we first provide background on the Schrödinger Bridge (SB) process. We then introduce a tractable formulation that explicitly connects the dynamic formulation of the SB problem to the solution of its corresponding static problem. Finally, we apply this formulation to the Gaussian SB case and rederive its well-known closed-form solutions—previously obtained through considerably more sophisticated frameworks such as Riemannian geometry and generator theory.

\subsection{Background on Schr\"{o}dinger Bridge}
We are interested in find nonlinear functions that transfor one distribution $\mu_0(\bx)$ to another distribution $\mu_1(\bx)$ for $\bx\in\bR^d$. Our algorithms are implemented through constructing a stochastic diffusion process $\{\bx_t\}_{t=0}^1$ indexed by a continuous time variable $t\in[0,1]$ such that $\bx_0\sim\mu_{0}$, the initial distribution, and $\bx_1\sim\mu_1$, the target distribution. Denote by $\Omega=C([0,1],\bR^d)$ the space consisting of all $\bR^d$-valued continuous functions on the time interval $[0,1]$ and $\bP$ the measure over $\Omega$ induced by the following SDE
\begin{eqnarray}\label{sde0}
d\bx_t=\bb(\bx_t,t) dt+\sigma(t) d\bw_t,&&\bx_0\sim\mu_0,
\end{eqnarray}
where $\bw_t$ is the standard $d$-dimensional Brownian motion, $\bb(\cdot,\cdot):\bR^d\times[0,1]\rightarrow\bR^d$ is a vector-valued function called the drift coefficient of $\bx_t$, and $\sigma(t):\bR\rightarrow\bR$ is a scalar function known as the diffusion coefficient of $\bx_t$. When $\bb(\bx,t)=0$ and $\sigma(t)=1$, $\bx_t$ is just the standard $d$-dimensional Brownian motion. The SDE (\ref{sde0}) has a unique strong solution as long as the drift coefficient function $\bb(\bx,t)$ is globally Lipschitz in both state $\bx$ and time $t$ \citep{10.5555/129416}. 

Denote by $\mQ_t$ the marginal probability law at time $t$ for the probability measure $\mQ$ on $\Omega$. We write ${\cal D}(\mu_{0},\mu_1)=\{\mQ:\mQ_0=\mu_{0},\mQ_1=\mu_1\}$ for the set of all path measures with given marginal distribution $\mu_{0}$ at the initial time and $\mu_1$ at the final time. Then the solution of the SB problem with respect to the reference measure $\bP$ can be formulated by the following optimization problem
\begin{eqnarray}\label{kld0}
\mQ^\star&=&\text{argmin}_{\mQ\in{\cal D}(\mu_{0},\mu_1)}D(\mQ\|\bP),
\end{eqnarray}
where $D(\mQ\|\bP)$ denotes the relative entropy between two probability measures on $\mQ$ and $\bP$ which is defined as
\begin{eqnarray}\label{kldef}
D(\mQ\|\bP)&=&\left\{\begin{array}{cc}\int\log(d\mQ/d\bP)d\mQ&if~\mQ\ll\bP\\
\infty&otherwise\end{array}\right.,
\end{eqnarray}
where $\mQ\ll\bP$ denotes that $\mQ$ is absolutely continuous w.r.t. $\bP$ and $d\mQ/d\bP$ represents the Radon-Nikodym derivative of $\mQ$ w.r.t. $\bP$.

\subsection{A closed-form fromulation to the Schr\"{o}dinger Bridge problem}
Denote $\Pi(\mu_0,\mu_1)$ the set of all couplings between $\mu_0$ and $\mu_1$. The static Schrodinger bridge consists in finding the joint distribution $\mu^\star(\bx_0,\bx_1)\in\Pi(\mu_0,\mu_1)$ which is closest to the reference prior subject to marginal constraints, that is
\begin{eqnarray}\label{static}
\mu^\star=\text{argmin}_{\mu(\bx_0,\bx_1)\in\Pi(\mu_0,\mu_1)}D(\mu\|\bP_{01}),
\end{eqnarray}
where $\bP_{01}(\bx_0,\bx_1)$ denotes the joint distribution at times $0$ and $1$ for the reference measure $\bP$ on $\Omega$ which is given by $\bP_{01}(\bx_0,\bx_1)=\mu_0(\bx_0)q(0,\bx_0,1,\bx_1)$, where with $q(t_1,\bx,t_2,\by)$ denotes the transition density of $\bx_{t_2}=\by$ at time $t_2$ given $\bx_{t_1}=\bx$ at time $t_1$ for stochastic process $\bx_t$ governed by the reference SDE (\ref{sde0}). For Brownian motion reference process, i.e. $\bb(\bx,t)=0$ and $\sigma(t)=\sigma$, (\ref{static}) becomes the entropically-regularized optimal transport problem defined as
\begin{eqnarray}\label{static0}
\mu^\star=\text{argmin}_{\mu(\bx_0,\bx_1)\in\Pi(\mu_0,\mu_1)}\left\{\int\frac{\|\bx_1-\bx_0\|^2}{2\sigma^2}d\mu(\bx_0,\bx_1)+D(\mu\|\mu_0\otimes\mu_1)\right\},
\end{eqnarray}
where $\mu_0\otimes\mu_1$ is the joint distribution over $\bx_0,\bx_1$ in which $\bx_0$ and $\bx_1$ are independent.

Denote the gradient of a smooth function  $f(\bx)$ by $\nabla f(\bx)$ and the partial derivative with respect to $\bx$ of $\psi(\bx,\by)$ for $(\bx, \by) \in \mathbb{R}^d\times\mathbb{R}^d$ by $\nabla_{\bx} \psi(\bx,\by)$. 
We state in the following theorem the solution $\mQ^\star$ to the SB problem (\ref{kld0}) whose reference measure $\bP$ is induced by the SDE (\ref{sde0}).
\begin{theorem}\label{thm}
Assume that $\sigma(t)\in C^1([0,1])$ and the components of $\bb(\bx,t)$ are bounded continuous and satisfy H\"older conditions with respect to $\bx$, i.e., there are real constants $C\ge 0,~\alpha\textgreater 0$ such that $|b_i(\bx,t)-b_i(\by,t)|\le C\|\bx-\by\|^\alpha$ for all $i=1,\cdots,d$ and $\bx,\by\in\bR^d$. Then the solution probability measures $\mQ^\star$ to the SB problem (\ref{kld0}) is induced by the following SDE with a modified drift:
\begin{eqnarray}\label{sde}
d\bx_t=[\bb(\bx_t,t)+\bu^\star(\bx_t,t)] dt+\sigma(t) d\bw_t,~~~\bx_0\sim\mu_0,
\end{eqnarray}
where the drift term $\bu^\star(\bx,t)$ is given by
\begin{eqnarray}\label{drift0}
\bu^\star(\bx,t)&=&\frac{\sigma(t)^2\int s_{\bx_1}(\bx,t)g_t(\bx,\bx_0,\bx_1)\mu^\star(d\bx_0,d\bx_1)}{\int g_t(\bx,\bx_0,\bx_1)\mu^\star(d\bx_0,d\bx_1)},
\end{eqnarray}
where $\mu^\star$ is the solution to the static problem (\ref{static}), $s_{\bx_1}(\bx,t)$ is the conditional score defined as 
\begin{eqnarray}\label{score}
\bs_{\bx_1}(t,\bx_t) = \nabla_{\bx_t}\log q(t,\bx_t,1,\bx_1),
\end{eqnarray}
and
\begin{eqnarray}\label{transition}
g_t(\bx,\bx_0,\bx_1)&=&\frac{q(0,\bx_0,t,\bx)q(t,\bx,1,\bx_1)}{q(0,\bx_0,1,\bx_1)}.
\end{eqnarray}
\end{theorem}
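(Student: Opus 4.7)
The plan is to exploit the disintegration of relative entropy, reduce the dynamic SB problem to the static SB problem, and then recover the SDE via Doob's $h$-transform together with a Bayes-rule computation for the drift.

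First, I apply the chain rule for relative entropy with respect to the endpoint $\sigma$-algebra:
\begin{equation*}
D(\mQ\|\bP) = D(\mQ_{01}\|\bP_{01}) + \int D(\mQ^{\bx_0,\bx_1}\|\bP^{\bx_0,\bx_1})\,\mQ_{01}(d\bx_0,d\bx_1),
\end{equation*}
where $\mQ^{\bx_0,\bx_1}$ and $\bP^{\bx_0,\bx_1}$ are the regular conditional distributions on paths given the endpoints. Because the constraint $\mQ\in{\cal D}(\mu_0,\mu_1)$ fixes only the one-dimensional marginals, the two summands can be minimized separately. The second term vanishes whenever $\mQ^{\bx_0,\bx_1}=\bP^{\bx_0,\bx_1}$ for $\mQ_{01}$-a.e.\ $(\bx_0,\bx_1)$, and the first term reduces precisely to the static problem (\ref{static}) with minimizer $\mu^\star$. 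Hence
\begin{equation*}
\mQ^\star = \int \bP^{\bx_0,\bx_1}\,\mu^\star(d\bx_0,d\bx_1),
\end{equation*}
i.e., $\mQ^\star$ is a mixture of reference bridges with endpoints drawn from the static coupling $\mu^\star$.

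Next, I identify the SDE of each bridge $\bP^{\bx_0,\bx_1}$ by Doob's $h$-transform. The H\"older and continuity assumptions on $\bb$ and $\sigma$ guarantee that the reference SDE (\ref{sde0}) admits a smooth transition density $q$ obeying the Kolmogorov forward/backward equations, so that conditioning on the terminal value $\bx_1$ yields the bridge SDE
\begin{equation*}
d\bx_t = \bigl[\bb(\bx_t,t) + \sigma(t)^2 \nabla_{\bx_t}\log q(t,\bx_t,1,\bx_1)\bigr]\,dt + \sigma(t)\,d\bw_t,
\end{equation*}
started at $\bx_0$, with time-$t$ marginal density given by the bridge kernel $g_t(\bx,\bx_0,\bx_1)$ of (\ref{transition}). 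I then derive the drift of the mixture by conditioning on the current state: since all bridges share the common diffusion coefficient $\sigma(t)$, the mixture admits a Markov representation of the form (\ref{sde}) whose drift is the conditional expectation of the bridge drift given $\bx_t$,
\begin{equation*}
\bu^\star(\bx,t) = \mE_{\mQ^\star}\!\bigl[\sigma(t)^2 \nabla_{\bx_t}\log q(t,\bx_t,1,\bx_1)\,\bigm|\,\bx_t=\bx\bigr].
\end{equation*}
By Bayes' rule, the posterior density of $(\bx_0,\bx_1)$ given $\bx_t=\bx$ is proportional to $g_t(\bx,\bx_0,\bx_1)\mu^\star(\bx_0,\bx_1)$, which after normalization delivers the closed form (\ref{drift0}).

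The main obstacle is justifying that this mixture of bridges is itself Markov and satisfies the SDE (\ref{sde}), since mixtures of Markov processes are not Markov in general. I would resolve this by invoking the well-known product structure $\mu^\star(\bx_0,\bx_1)=\phi_0(\bx_0)\phi_1(\bx_1)q(0,\bx_0,1,\bx_1)$ of static SB solutions (the Schr\"odinger system), which causes the time-$t$ density of $\mQ^\star$ to factor as $\tilde\phi_0(\bx,t)\tilde\phi_1(\bx,t)$, with $\tilde\phi_0,\tilde\phi_1$ solving the Kolmogorov forward/backward equations. A direct Fokker--Planck computation then verifies that $\rho_t(\bx)$ satisfies the forward equation of the SDE (\ref{sde}); combined with uniqueness of weak solutions under the stated H\"older regularity, this identifies $\mQ^\star$ as the law of (\ref{sde}).
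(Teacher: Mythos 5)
Your proposal is correct and its first half coincides with the paper's: both use the chain rule for relative entropy over the endpoint $\sigma$-algebra to decouple the dynamic problem into the static problem plus a bridge-matching term, concluding that $\mQ^\star$ is the $\mu^\star$-weighted mixture of reference bridges. Where you diverge is in how you turn that mixture into the SDE (\ref{sde}). The paper proceeds analytically: it posits a gradient-drift ansatz $\bb-\sigma^2\nabla\phi$, computes $d\mQ/d\bP$ via Girsanov and It\^o, imposes a Hamilton--Jacobi--Bellman-type PDE on $\phi$ so that the density depends only on the endpoints, linearizes it by the Cole--Hopf substitution $\phi=-\log h$, and identifies $h=\rho_t$ through Feynman--Kac, after which the factorization $\mu^\star=\hat\rho_0\rho_1 q$ shows the endpoint density equals $\mu^\star/\bar\mu$ and hence attains the entropy lower bound. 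You instead argue probabilistically: Doob's $h$-transform gives each bridge the drift $\bb+\sigma^2\nabla\log q(t,\cdot,1,\bx_1)$, and Bayes' rule on the mixture gives the posterior of $(\bx_0,\bx_1)$ given $\bx_t=\bx$ as proportional to $g_t\mu^\star$, which is exactly (\ref{drift0}) (the paper records this same identity as its equation (\ref{ualt}), but only after the fact). You correctly flag the one genuine subtlety in your route --- a mixture of Markov bridges need not be Markov --- and your resolution via the Schr\"odinger-system product structure $\mu^\star=\phi_0\phi_1 q$ is the right one: it shows $d\mQ^\star/d\bP$ is a product of a function of $\bx_0$ and a function of $\bx_1$, so $\mQ^\star$ is itself an $h$-transform of the reference process (with $h=\rho_t$ solving the backward Kolmogorov equation) reweighted at time $0$, and the Fokker--Planck verification you describe is essentially the Remark following Lemma \ref{lemma} in the paper's appendix. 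The trade-off: your route avoids the nonlinear PDE and Cole--Hopf machinery entirely and makes the Bayes structure of the drift transparent, at the cost of having to invoke the Markov property of $h$-transforms (or a uniqueness-of-weak-solutions argument to pass from matching marginal flows to matching path laws); the paper's Girsanov construction builds the optimal law directly as a strong solution and verifies optimality by exhibiting its Radon--Nikodym derivative, at the cost of a more opaque derivation.
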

The proof is given in Appendix \ref{proof}. Theorem \ref{thm} shows that we can start from initial sample $\bx_0\sim\mu_0$ and update the values of $\{\bx_t : 0 < t \le 1\}$ according to the SDE (\ref{sde}) in continuous time, the value $\bx_1$ has the desired distributional property, that is, $\bx_1 \sim \mu_1$. 

For the solution to the SB problem (\ref{sde}), equation (\ref{drift0}) provides an explicit relation between the drift term and the solution to its corresponding static problem. This formulation is general, encompassing all previously known closed-form results in this area as special cases. For instance, if $\mu_{0}=\delta_{\ba}$, the Dirac delta distribution centered at $\ba\in\bR^d$, then $\mu^\star(d\bx_0,d\bx_1)=\delta_{\ba}(\bx_0)\mu_1(\bx_1)$, and (\ref{drift0}) simplifies to
\begin{eqnarray}\label{drifts}
\bu^\star(\bx,t)&=&\frac{\sigma(t)^2\int s_{\bx_1}(\bx,t)\frac{q(t,\bx,1,\bx_1)}{q(0,\ba,1,\bx_1)}\mu_1(d\bx_1)}{\int\frac{q(t,\bx,1,\bx_1)}{q(0,\ba,1,\bx_1)}\mu_1(d\bx_1)}.
\end{eqnarray}
This is a well-known result that has been studied in \citet{Pavon1989,DaiPra1991,Leonard2014,TzenR19,huang2024}. When the reference process $\bP$ is induced by (\ref{sde0}) with $\bb(\bx_t,t)=0$ and $\sigma(t)=\sigma$, i.e., when $\bP$ is a Wiener measure, the SB problem (\ref{kld0}) with initial marginal $\delta_{\ba}$ is referred to as the Schr\"{o}dinger–F\"{o}llmer process (SFP) \citep{Follmer}. The simplified expression in (\ref{drifts}) has been recently applied to generative modeling \citep{huang2021schrodingerfollmer,huang2024} and Bayesian inference \citep{vargas2022bayesian,zhang2022path}.

Moreover, the drift function (\ref{drifts}) can be equivalently characterized as the solution to the following stochastic control problem:
\begin{eqnarray}\label{optimal}
\bu^\star(\bx_t,t)&=&\text{argmin}_{\bu(\bx_t,t)}\mE\left[\frac{1}{2}\int|\bu(\bx_t,t)|^2dt\right],\\\nn
\text{s.t.}&&\left\{\begin{array}{c}d\bx_t=[\bb(\bx_t,t)+\bu(\bx_t,t)]dt+\sigma(t)d\bw_t,\\\bx_0=\delta_{\ba},~~~\bx_1\sim\mu_1.\end{array}\right.
\end{eqnarray}
An appealing feature of this control formulation is that it naturally yields efficient sampling schemes by transporting particles from any fixed point $\ba$ to samples drawn from the target distribution $\mu_1$ over the unit time interval \citep{zhang2022path}.

We can simplify (\ref{drift0}) by introducing the conditional distributions: $\pi(\bx_1|\bx)\equiv q(t,\bx,1,\bx_1)$, $\pi(\bx|\bx_0)\equiv q(0,\bx_0,t,\bx)$, and $\pi(\bx_1|\bx_0)\equiv q(0,\bx_0,1,\bx_1)$. Using the definition of $g_t(\bx,\bx_0, \bx_1)$ in (\ref{transition}), we compute the additional drift term as:
\begin{eqnarray}\nn
\bu^\star(\bx,t)&=&\frac{\sigma(t)^2\int\frac{\bs_{\bx_1}(t,\bx)\pi(\bx|\bx_0)\pi(\bx_1|\bx)}{\pi(\bx_1|\bx_0)}\mu^\star(d\bx_0,d\bx_1)}{\int \frac{\pi(\bx|\bx_0)\pi(\bx_1|\bx)}{\pi(\bx_1|\bx_0)}\mu^\star(d\bx_0,d\bx_1)}
\\\nn
&=&\frac{\sigma(t)^2\int\bs_{\bx_1}(t,\bx)\pi(\bx|\bx_0,\bx_1)\mu^\star(d\bx_0,d\bx_1)}{\int\pi(\bx|\bx_0,\bx_1)\mu^\star(d\bx_0,d\bx_1)}\\\nn
&=&\sigma(t)^2\int\bs_{\bx_1}(t,\bx)\pi(\bx_0,\bx_1|\bx)\mu^\star(d\bx_0,d\bx_1)\\\label{ualt}
&=&\sigma(t)^2\mE_{\pi(\bx_0,\bx_1|\bx)}\bs_{\bx_1}(t,\bx),
\end{eqnarray}
where the expectation is taken over the conditional joint distribution of $\bx_0$ and $\bx_1$ given $\bx_t = \bx$. 

\subsection{Exact Solution of the Schrödinger Bridge Problem for Multivariate Gaussian Measures}\label{refsde}
Note that in (\ref{drift0}), the transition kernel corresponds to the reference process defined by (\ref{sde0}). To simplify computation, one may select a reference SDE that admits a closed-form solution. In this paper, we consider a general class of reference measures $\bP$ that encompasses most processes commonly used in machine learning applications of Schrödinger bridges (SBs). Specifically, given an initial condition $\bx_0$, we define $\bP$ as the measure induced by the following linear stochastic differential equation (SDE):
\begin{eqnarray}\label{affine}
d\bx_t&=&{c(t)\bx_t+\alpha(t)}dt+\sigma(t)d\bw_t,
\end{eqnarray}
where $c(\cdot):[0,1]\rightarrow\bR$ and $\alpha(\cdot):[0,1]\rightarrow\bR^d$.
Let $\tau(t)=\exp\left(\int_0^t c(s)ds\right)$. Then, the solution to (\ref{sde0}) can be expressed as
\begin{eqnarray}\label{xt}
\bx_t=\tau(t)\left(\bx_0+\int_0^t\frac{\alpha(s)}{\tau(s)},ds+\int_0^t\frac{\sigma(s)}{\tau(s)}d\bw_s\right).
\end{eqnarray}
Using the independent increments of $\bx_t$ and Itô’s isometry, we obtain
\begin{eqnarray}\nn
\mE(\bx_t|\bx_0)=\tau(t)\left(\bx_0+\int_0^t\frac{\alpha(s)}{\tau(s)}ds\right)=\eta(t),
\end{eqnarray}
and, for any $t'\ge t$,
\begin{eqnarray}\nn
\mE[(\bx_t-\eta(t))(\bx_{t'}-\eta(t'))^T|\bx_0]
=\big(\tau(t)\tau(t')\int_0^t\frac{\sigma(s)^2}{\tau(s)^2}ds\big)\bI_d
=\kappa(t,t')\bI_d.
\end{eqnarray}

Importantly, the expression in (\ref{affine}) subsumes all three reference SDEs proposed in \citet{song2021scorebased}, which have been successfully employed in a variety of probabilistic generative modeling tasks:
\begin{itemize}
\item Variance Exploding (VE) SDE:
\begin{eqnarray}\nn
c(t)=0,~~\alpha(t)=0.
\end{eqnarray}
When $\sigma(t)=\sigma$, this corresponds to the standard $d$-dimensional Brownian motion.
\item Variance Preserving (VP) SDE:
\begin{eqnarray}\label{vpsde}
c(t)=-\tfrac{1}{2}\beta(t),\alpha(t)=0,\sigma(t)=\sqrt{\beta(t)},
\end{eqnarray}
where $\beta(\cdot):[0,1]\rightarrow\bR_+$. When $\beta(t)=1$, the process reduces to the Ornstein–Uhlenbeck process.
\item Sub-Variance Preserving (sub-VP) SDE:
\begin{eqnarray}\nn
c(t)=-\tfrac{1}{2}\beta(t),\alpha(t)=0,\sigma(t)=\sqrt{\beta(t)\left(1-e^{-2\int_0^t\beta(s),ds}\right)}.
\end{eqnarray}
\end{itemize}
The VE SDE produces trajectories with diverging variance as $t\to\infty$, whereas the VP SDE yields processes with bounded variance. In particular, if the initial distribution $\bx_0$ has unit variance, the VP SDE preserves this property for all $t\in[0,\infty)$. The two principal classes of score-based generative models—score matching with Langevin dynamics (SMLD) and denoising diffusion probabilistic models (DDPM)—can be viewed as discrete approximations of the VE and VP SDEs, respectively \citep{song2021scorebased}.

The following corollary provides the analytic solution to the SB problem between Gaussian measures.
\begin{corollary}\label{coro}.
Let $\mu_0=\mathcal{N}(\nu_0,\bSigma_0)$ and $\mu_1=\mathcal{N}(\nu_1,\bSigma_1)$ be two arbitrary Gaussian distributions. Then, for the reference measure defined by (\ref{affine}), the closed-form expression of (\ref{drift0}) is given by
\begin{eqnarray}\nn
\bu^\star(\bx,t)&=&\sigma(t)^2\left(\frac{\tau(1)}{\tau(t)\kappa(1,1)}[r(t)\bSigma_1+\bar{r}(t)C_{\sigma_\star}^T]
-\frac{\tau(1)^2}{\tau(t)\kappa(1,1)}[\bar{r}(t)\bSigma_0+r(t)C_{\sigma_\star}]
-\rho(t)\bI_d\right)\\\nn
&&\bSigma_t^{-1}\left\{\bx-\bar{r}(t)\bnu_0-r(t)\bnu_1-\zeta(t)+r(t)\zeta(1)\right\}\\\label{exact}
&&+\frac{\sigma(t)^2\tau(1)}{\tau(t)\kappa(1,1)}(\bnu_1-\zeta(1))-\frac{\sigma(t)^2\tau(1)^2}{\tau(t)\kappa(1,1)}\bnu_0,
\end{eqnarray}
where $\sigma_\star^2=\kappa(1,1)/\tau(1)$,
\begin{eqnarray}\nn
D_\sigma=(4\bSigma_0^{1/2}\bSigma_1\bSigma_0^{1/2}+\sigma^4\bI_d)^{1/2},&&C_\sigma=\frac{1}{2}(\bSigma_0^{1/2}D_\sigma\bSigma_0^{-1/2}-\sigma^2\bI_d),
\end{eqnarray}
\begin{eqnarray}\nn
\bSigma_t=\bI_d+\frac{\tau(t)^2\kappa_1(t,t)}{\kappa(t,t)\kappa(1,1)}\bSigma_0
+\frac{\tau(1)^2\kappa(t,t)}{\tau(t)^2\kappa_1(t,t)\kappa(1,1)}\bSigma_1
+\frac{\tau(1)}{\kappa(1,1)}(C_{\sigma_\star}+C_{\sigma_\star}^T),
\end{eqnarray}
and the functions $\tau(t),\zeta(t),\kappa(t,t),\kappa_1(t,t),r(t),\bar{r}(t),\rho(t)$ are all defined in Appendix \ref{Gaussian}.
\end{corollary}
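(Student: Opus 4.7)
The plan is to specialize Theorem~\ref{thm}, in the simplified form~(\ref{ualt}), to the case where both marginals are Gaussian and the reference measure $\bP$ is induced by the affine SDE~(\ref{affine}). The guiding observation is that linearity of the reference SDE makes every transition density Gaussian, and combining this with Gaussian endpoints forces all distributions appearing in~(\ref{ualt})—namely $\mu^\star(\bx_0,\bx_1)$, $q(0,\bx_0,t,\bx)$, $q(t,\bx,1,\bx_1)$, and the conditional $\pi(\bx_0,\bx_1|\bx)$—to be jointly Gaussian. The conditional expectation of the score then reduces to an affine functional of $\bx$, whose coefficients can be read off from the relevant covariance blocks.

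First, using the explicit solution~(\ref{xt}) and the formulas for $\eta(t)$ and $\kappa(t,t')\bI_d$ already given, I would write down $q(t_1,\bx,t_2,\by)$ as the Gaussian density with mean $(\tau(t_2)/\tau(t_1))\bx+\tau(t_2)\int_{t_1}^{t_2}\alpha(s)/\tau(s)\,ds$ and covariance proportional to $\bI_d$. Differentiating in $\bx$ yields the conditional score $\bs_{\bx_1}(t,\bx)$, which is linear in $(\bx,\bx_1)$. Second, I would solve the static problem~(\ref{static}) with reference $\bP_{01}(\bx_0,\bx_1)=\mu_0(\bx_0)q(0,\bx_0,1,\bx_1)$. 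Since $q(0,\bx_0,1,\bx_1)$ is Gaussian in $\bx_1$ with mean affine in $\bx_0$ and isotropic covariance $\kappa(1,1)\bI_d$, a linear change of variables $\bx_1\mapsto(\bx_1-\zeta(1))/\sqrt{\tau(1)}$ reduces the problem to the classical Gaussian static SB with effective noise $\sigma_\star^2=\kappa(1,1)/\tau(1)$. The optimal coupling is therefore jointly Gaussian with marginal means $\bnu_0,\bnu_1$ and a cross-covariance block built from $C_{\sigma_\star}$, where $D_\sigma$ and $C_\sigma$ arise as the matrix square-root factors in the Gaussian Bures-type formula.

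Next I would assemble the drift. Because $\mu^\star(\bx_0,\bx_1)$ and the two transition densities are all Gaussian, the joint law of $(\bx_0,\bx_t,\bx_1)$ implied by~(\ref{transition}) and $\mu^\star$ is Gaussian, so $\pi(\bx_0,\bx_1|\bx)$ is Gaussian with a mean that is affine in $\bx$ and a covariance independent of $\bx$. Its marginal covariance in $\bx$ is exactly the $\bSigma_t$ quoted in the corollary, obtained by summing the prior-of-$\bx$ contributions from the forward propagation of $\mu_0$, the backward propagation of $\mu_1$, and the cross term contributed by $C_{\sigma_\star}$. Inserting the affine conditional mean into~(\ref{ualt}) and using linearity of the score gives $\bu^\star(\bx,t)$ as an affine function of $\bx$. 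Collecting the coefficients in terms of the auxiliary scalars $\tau(t),\zeta(t),\kappa(t,t),\kappa_1(t,t),r(t),\bar{r}(t),\rho(t)$ defined in Appendix~\ref{Gaussian} should recover the expression~(\ref{exact}).

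The main obstacle is algebraic bookkeeping rather than any conceptual difficulty: one must carefully track how the static cross-covariance $C_{\sigma_\star}$ propagates through the forward and backward reference kernels to produce the symmetric combination $C_{\sigma_\star}+C_{\sigma_\star}^T$ appearing in $\bSigma_t$, and ensure that the affine-in-$\bx$ part and the constant part of the conditional mean factor into the two terms displayed in~(\ref{exact}). A secondary subtlety is handling the non-Brownian reference in the static problem; I expect the cleanest route is the change of variables mentioned above, which lets one quote the Gaussian static SB solution directly rather than re-deriving it, leaving only the translation back to the original coordinates as a clerical step.
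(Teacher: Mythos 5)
Your proposal is correct and follows essentially the same route as the paper's Appendix on the Gaussian case: Gaussian transition kernels from the affine reference SDE, reduction of the static problem to the known entropic-OT-between-Gaussians solution via an affine change of variables (yielding the effective scale $\sigma_\star^2=\kappa(1,1)/\tau(1)$ and cross-covariance $C_{\sigma_\star}$), then Gaussian conditioning to obtain the affine conditional mean of $\bx_1$ given $\bx_t=\bx$, which is substituted into the linear conditional score. The only differences are cosmetic—the paper transforms $\bx_0$ rather than $\bx_1$ in the static reduction and works from the ratio form (\ref{drift0}) rather than the equivalent conditional-expectation form (\ref{ualt}), and its $\bSigma_t$ is the conditional-mean normalization matrix (the marginal covariance of $\bx_t$ only up to a scalar factor)—none of which affects the argument.
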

The complete derivation is presented in Appendix~\ref{Gaussian}, where it is verified that (\ref{exact}) exactly matches the well-known result in Theorem~3 of \citet{bunne2023schrodingerbridgegaussianmeasures}. Although the algebraic computation is somewhat tedious, the derivation itself is conceptually straightforward, as all transition kernels and the joint distribution $\mu^\star(\bx_0,\bx_1)$ in (\ref{drift0}) are multivariate Gaussians. In contrast, the proof in \citet{bunne2023schrodingerbridgegaussianmeasures} relies on considerably more sophisticated tools from entropic optimal transport, Riemannian geometry, and generator theory.

\section{Practical Implementation}\label{setting}

In this section, we detail the implementation of our closed-form framework described in Theorem~\ref{thm} for learning the stochastic differential equations (SDEs) of the Schrödinger Bridge (SB) problem from marginal observations. In Section~\ref{setting1}, we introduce a training-free algorithm derived directly from the theorem. In Section~\ref{setting2}, we present an equivalent variational formulation of Theorem~\ref{thm}, from which we derive a simulation-free algorithm that leverages deep neural network training.

\subsection{Training-free algorithm}\label{setting1}

The drift term (\ref{drift0}) involves integration over the joint distribution $\mu^\star(\bx_0,\bx_1)$, which typically lacks a closed-form expression. To evaluate (\ref{drift0}) accurately, one must estimate the probability density $\hat{\mu}^\star(\bx_0,\bx_1)$ from available samples. This estimation is particularly challenging in high-dimensional settings, as $\mu^\star$ is often multi-modal or concentrated on a low-dimensional manifold, making it difficult to approximate using simple distributional families such as Gaussians or Gaussian mixtures. Classical density estimation methods include kernel-based techniques \citep{10.1214/aoms/1177704472}, interpolation-based methods \citep{10.1214/aos/1176342998}, and diffusion-based approaches \citep{1056736}. However, these techniques are primarily effective in low-dimensional settings and do not scale efficiently to higher dimensions.

In this work, we exploit two key properties of formula (\ref{drift0}) to construct a direct estimator for the drift coefficient $\bu^\star(\bx,t)$.
First, since (\ref{drift0}) involves only expectations under $\mu^\star$, these integrals can be approximated by empirical averages:
\begin{eqnarray}\label{clt}
\int g_t(\bx,\bx_0,\bx_1)\mu(d\bx_0,d\bx_1)\approx\frac{1}{n}\sum_{i=1}^n g_t(\bx,\bx_0^{(i)},\bx_1^{(i)}),
\end{eqnarray}
which leads to the following estimator:
\begin{eqnarray}\label{drift}
\hat{\bu}^\star(\bx,t)=\frac{\sigma(t)^2\sum_{i=1}^n \bs_{\bx_1^{(i)}}(\bx,t) g_t(\bx,\bx_0^{(i)},\bx_1^{(i)})}{\sum_{i=1}^n g_t(\bx,\bx_0^{(i)},\bx_1^{(i)})}.
\end{eqnarray}
This formulation completely avoids neural network training, thereby eliminating architectural design considerations and substantially reducing computational cost. The second advantage arises from the fact that the transition density $q(t_1,\bx,t_2,\by)$ in (\ref{transition}) is defined with respect to the reference process, which can often be computed in closed form for suitable choices of the drift $\bb(\bx,t)$ and diffusion $\sigma(t)$ in (\ref{sde0}). For example, when $\bb(\bx,t)$ is affine in $\bx$, as in (\ref{affine}), the transition kernel becomes Gaussian.

The joint distribution $\hat{\mu}^\star$ in (\ref{drift0}) corresponds to the solution of the static Schrödinger Bridge problem (\ref{static}), which is equivalent to the entropically regularized optimal transport (EOT) problem:
\begin{eqnarray}\label{eot}
\mu^\star=\text{argmin}_{\mu\in\Pi(\mu_0,\mu_1)}\left\{\int\|\bx_1-(\tau(1)\bx_0+\zeta(1))\|^2d\mu+2\kappa(1,1)\int\log\mu, d\mu\right\},
\end{eqnarray}
under the reference SDE (\ref{affine}). Given paired data $\left\{\bx_0^{(i)},\bx_1^{(i)}\right\}_{i=1}^n$, we can minimize the empirical objective
\begin{eqnarray}\nn
\sum_{i=1}^n\|\bx_1^{(i)}-(\tau(1)\bx_0^{(i)}+\zeta(1))\|^2,
\end{eqnarray}
to estimate the parameters $\hat{\tau}(1)$ and $\hat{\zeta}(1)$, which in turn define an optimal reference SDE process. Once the reference SDE is specified, the corresponding function $g_t$ in (\ref{drift0}) can be substituted to obtain the drift estimator in (\ref{drift}).

In practice, paired samples are rarely available. Instead, we often have access only to i.i.d. samples forming two empirical marginal distributions $\hat{\mu}_0$ and $\hat{\mu}_1$, of size $m$ and $n$. In this case, one can obtain synthetic paired samples by solving the EOT problem (\ref{eot}) between $\hat{\mu}_0$ and $\hat{\mu}_1$. Importantly, empirical approximations of EOT remain accurate even in high-dimensional spaces \citep{aude2016stochasticoptimizationlargescaleoptimal}, and recent results show that the Schrödinger Bridge inherits this favorable property \citep{pmlr-v206-stromme23a}. The EOT plan between empirical distributions can be efficiently computed using the Sinkhorn algorithm \citep{NIPS2013_af21d0c9}, which has ${\cal O}(n^2)$ complexity \citep{NIPS2017_491442df}, or via stochastic optimization methods \citep{aude2016stochasticoptimizationlargescaleoptimal}. The complete training-free Schrödinger Bridge (TFSB) estimation procedure is summarized in Algorithm~\ref{alg1}.
\begin{algorithm}[H]
\caption{Training-free estimation}
\label{alg1}
\begin{algorithmic}
\STATE \textbf{Input:} data $\left\{\bx_0^{(i)}\right\}_{i=1}^m$ and $\left\{\bx_1^{(j)}\right\}_{j=1}^n$
\IF{paired data $\left\{\bx_0^{(i)},\bx_1^{(i)}\right\}_{i=1}^n$ are available}
\STATE Compute $\bu^\star(\bx,t)$ according to (\ref{drift})
\ELSE
\STATE Compute the EOT plan $\pi\leftarrow\text{Sinkhorn}\{\bx_0,\bx_1,2\kappa(1,1)\}$ using (\ref{eot})
\STATE Sample paired data $\left\{\bx_0^{(i)},\bx_1^{(i)}\right\}_{i=1}^n\sim\pi$
\STATE Compute $\bu^\star(\bx,t)$ according to (\ref{drift})
\ENDIF
\end{algorithmic}
\end{algorithm}

\subsection{Simulation-free deep neural network training algorithm}\label{setting2}

Leveraging insights from score-based generative learning, we introduce the following theorem, which provides an alternative characterization of the drift term $\bu^\star(\bx,t)$ as a conditional expectation. This formulation naturally leads to a quadratic objective function for which $\bu^\star(\bx,t)$ emerges as the unique minimizer.
\begin{theorem}\label{thm1}
The drift term $\bu^\star(\bx_t,t)$ in (\ref{sde}) is the time-varying vector field $\bu(\bx_t,t)$ that minimizes the quadratic objective
\begin{eqnarray}\label{lse01}
\mE_{\bQ}\left\|\sigma(t)^2\bs_{\bx_1}(t,\bx_t)-\bu(\bx_t,t)\right\|^2,
\end{eqnarray}
where $\bQ = [t \sim \mathcal{U}(0,1)] \otimes \mu^\star(\bx_0,\bx_1) \otimes \pi(\bx_t|\bx_0,\bx_1)$, the conditional score is defined in (\ref{score}), and the conditional bridge distribution is given by
\begin{eqnarray}\label{bridge}
\pi(\bx_t|\bx_0,\bx_1) = \frac{q(0,\bx_0,t,\bx_t)\, q(t,\bx_t,1,\bx_1)}{q(0,\bx_0,1,\bx_1)}.
\end{eqnarray}
\end{theorem}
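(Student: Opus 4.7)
The plan is to reduce Theorem~\ref{thm1} to the classical $L^2$-regression fact that the minimizer of $\mE\|Y - f(X)\|^2$ over measurable $f$ is the conditional expectation $f^\star(X) = \mE[Y | X]$. Identifying $Y = \sigma(t)^2 \bs_{\bx_1}(t,\bx_t)$ and $X = (t, \bx_t)$ under the measure $\bQ$, the minimizer is $\bu^\star(\bx_t, t) = \mE_\bQ[\sigma(t)^2 \bs_{\bx_1}(t, \bx_t) | t, \bx_t]$, and I would close the argument by showing that this conditional expectation coincides with the explicit drift (\ref{ualt}), which Theorem~\ref{thm} already identifies with the SB drift (\ref{drift0}).

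Concretely, I would first apply the tower property to rewrite
\begin{align*}
\mE_\bQ\|\sigma(t)^2 \bs_{\bx_1}(t,\bx_t) - \bu(\bx_t,t)\|^2
= \mE_{t,\bx_t}\Bigl[\mE_{\bx_0,\bx_1 | t,\bx_t}\|\sigma(t)^2 \bs_{\bx_1}(t,\bx_t) - \bu(\bx_t,t)\|^2\Bigr].
\end{align*}
Because $\bu(\bx_t,t)$ is a deterministic function of $(t, \bx_t)$, the inner quadratic is minimized pointwise by the conditional mean, and strict convexity gives uniqueness $\bQ$-almost surely. Next, I would identify the required conditional via Bayes' rule: under $\bQ$ the joint law of $(\bx_0, \bx_1, \bx_t)$ factorizes as $\mu^\star(\bx_0, \bx_1)\, \pi(\bx_t | \bx_0, \bx_1)$ with bridge density (\ref{bridge}), so
\begin{align*}
\pi(\bx_0, \bx_1 | \bx_t) = \frac{\mu^\star(\bx_0, \bx_1)\, \pi(\bx_t | \bx_0, \bx_1)}{\int \mu^\star(\bx_0', \bx_1')\, \pi(\bx_t | \bx_0', \bx_1')\, d\bx_0'\, d\bx_1'}.
\end{align*}
Substituting (\ref{bridge}) and the definition of $g_t$ in (\ref{transition}) reproduces the ratio on the right-hand side of (\ref{drift0}), yielding the desired identification $\bu^\star(\bx_t, t) = \sigma(t)^2\, \mE_{\pi(\bx_0, \bx_1 | \bx_t)}[\bs_{\bx_1}(t, \bx_t)]$.

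The main technical obstacle---and essentially the only non-routine step---is verifying that the $L^2$ framework is well-posed, namely that $\bs_{\bx_1}(t, \bx_t)$ is square-integrable under $\bQ$ and that the Bayes-rule computation is rigorously justified. Both should follow from the regularity assumptions of Theorem~\ref{thm} ($\sigma \in C^1([0,1])$ and bounded, Hölder-continuous drift $\bb$), which guarantee that the reference transition density $q(t_1, \bx, t_2, \by)$ is smooth and strictly positive, so the conditional score $\nabla_{\bx_t} \log q(t, \bx_t, 1, \bx_1)$ is well-defined and the bridge density (\ref{bridge}) is a valid probability density. With this integrability established, the rest is a standard tower-property argument combined with pointwise quadratic minimization.
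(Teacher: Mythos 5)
Your proposal is correct and follows essentially the same route as the paper: both arguments rest on the identity $\bu^\star(\bx,t)=\sigma(t)^2\mE_{\pi(\bx_0,\bx_1|\bx_t=\bx)}[\bs_{\bx_1}(t,\bx)]$ from (\ref{ualt}) together with the fact that the conditional expectation minimizes the quadratic risk, the paper phrasing this via the orthogonality (cross-term-vanishes) decomposition while you phrase it as pointwise minimization of the inner conditional expectation after the tower property. Your explicit Bayes-rule identification of $\pi(\bx_0,\bx_1|\bx_t)$ simply re-derives the chain of equalities the paper already records in (\ref{ualt}), and your integrability remark is a reasonable but minor addition not present in the paper's proof.
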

Equation~(\ref{lse01}) provides a practical foundation for designing deep learning algorithms to estimate $\bu^\star(\bx_t,t)$ nonparametrically. In particular, one can approximate $\bu^\star(\bx,t)$ by training a neural network $\bu_{\btheta}(\bx,t)$ on i.i.d.\ samples $t_i \sim \mathcal{U}(0,1)$, $(\bx_{0,i},\bx_{1,i}) \sim \mu^\star(\bx_0,\bx_1)$, and $\bx_{t,i} \sim \pi(\bx_{t,i} | \bx_{0,i},\bx_{1,i})$. The resulting regression problem seeks to minimize the expected squared error in (\ref{lse01}), thereby learning the time-dependent drift $\bu^\star$ as a function of $(\bx, t)$ given joint samples $(\bx_0, \bx_1) \sim \mu^\star(\bx_0,\bx_1)$. In this way, the estimation task reduces to a supervised learning problem for the drift field.

It is worth noting that Theorem~\ref{thm1} encompasses several existing simulation-free estimation methods for the Schrödinger Bridge problem as special cases. For instance, \citet{tong2024} propose a simulation-free score and flow matching framework for learning stochastic dynamics from unpaired samples drawn from arbitrary source and target distributions. Their method introduces two neural networks to match the conditional ODE flow $\bu^o(\bx,t|\bx_0,\bx_1)$ and the conditional score $\nabla_{\bx}\log p_t(\bx|\bx_0,\bx_1)$ separately, after which the conditional drift is reconstructed as
\begin{equation}\nn
\bu(\bx,t|\bx_0,\bx_1)
= \bu^o(\bx,t|\bx_0,\bx_1)
+ \frac{\sigma(t)^2}{2}\nabla_{\bx}\log p_t(\bx|\bx_0,\bx_1).
\end{equation}
From the explicit expressions\footnote{There is a typo in Eq.~(8) of \citet{tong2024}, where the first term in the first equation should be multiplied by $1/2$.} of $\bu^o(\bx,t|\bx_0,\bx_1)$ and $\nabla_{\bx}\log p_t(\bx|\bx_0,\bx_1)$ for Brownian bridges given in (8) of \citet{tong2024}, we obtain
\begin{equation}\nn
\bu(\bx,t|\bx_0,\bx_1) = \frac{\bx_1 - \bx}{1 - t},
\end{equation}
which coincides exactly with the $\sigma(t)^2 \bs_{\bx_1}(t,\bx_t)$ term in (\ref{lse01}) when the reference process is Brownian motion. Hence, the method of \citet{tong2024} can be viewed as a special instance of our framework, with the additional advantage that our approach requires fitting only a single neural network—substantially reducing computational cost.

Similarly, \citet{liu2023schrodinger} introduce the Image-to-Image Schrödinger Bridge (I$^2$SB) model, a family of conditional diffusion models that directly learn nonlinear diffusion processes between two prescribed distributions. While I$^2$SB shares conceptual similarities with score-based generative models and is effective for conditional sample generation, it does not yield an exact estimation of the underlying SDE drift term. Moreover, I$^2$SB is limited to the Brownian-motion reference process.

The regression-based training objective in (\ref{lse01}) provides a principled and efficient way to estimate the vector field $\bu(\bx, t)$ using only joint samples from $\mu^\star(\bx_0, \bx_1)$ and conditional samples from the tractable path distribution $\pi(\bx_t | \bx_0,\bx_1)$. We define the empirical loss as
\begin{eqnarray}\label{lhat}
\hat{\cal L}(\btheta)
= \frac{1}{mn}\sum_{i=1}^n\sum_{j=1}^m
\left\|
\sigma(t_j)^2\bs_{\bx_1^{(i)}}(t_j,\bx^{(i)}_{t_j})
- \bu_{\btheta}(\bx^{(i)}_{t_j},t_j)
\right\|^2,
\end{eqnarray}
where $(\bx_0^{(i)},\bx_1^{(i)})\sim p_{\text{data}}$,  $t_j\sim{\cal U}[\epsilon,1-\epsilon]$, and $\bx_{t_j}^{(i)}\sim\pi(\bx_{t_j}|\bx_0^{(i)},\bx_1^{(i)})$. To avoid numerical instability near the boundaries $t = 0$ and $t = 1$, we truncate the unit-time interval by $\epsilon$ at both ends, a common practice in score-based models \citep{song2021scorebased}.

We parameterize $\bu^\star(\bx,t)$ in (\ref{lse01}) by a neural network $\bu_{\btheta}(\bx,t)$ with trainable parameters $\btheta$, implemented as a feed-forward network expressive enough to approximate the true drift. The overall simulation free Schrödinger Bridge (SFSB) training process is summarized in Algorithm~\ref{alg2}.

\begin{algorithm}[H]
\caption{Simulation free training process of SBCG}
\label{alg2}
\begin{algorithmic}[1]
\STATE \textbf{Input:} Paired data $p_{\text{data}}=\{(\bx_0^{(i)},\bx_1^{(i)})\}_{i=1}^n$, number of iterations $L$.
\FOR{$l=1,\ldots,L$}
    \STATE Sample $(\bx_0^{(i)},\bx_1^{(i)})\sim p_{\text{data}}$.
    \STATE Sample $t_j\sim{\cal U}[\epsilon,1-\epsilon]$.
    \STATE Sample $\bx^{(i)}_{t_j}\sim\pi(\bx_{t_j}|\bx_0^{(i)},\bx_1^{(i)})$.
    \STATE Compute $\hat{\cal L}(\btheta)$ using (\ref{lhat}).
    \STATE Update $\btheta \leftarrow \btheta - \eta \nabla_{\btheta}\hat{\cal L}(\btheta)$.
\ENDFOR
\STATE \textbf{Output:} Trained drift model $\bu_{\btheta}(\bx,t)$.
\end{algorithmic}
\end{algorithm}

Note that our framework is not limited to neural networks—$\bu_{\btheta}$ may also be represented using other function classes such as kernel methods. However, for high-dimensional data and large-scale problems, neural networks often exhibit superior performance due to their expressive power.

\section{Numerical experiments}\label{numerical}

In this section, we demonstrate the effectiveness of our algorithm on both simulated and real datasets. We begin with two-dimensional toy examples to illustrate the ability of our method to learn multimodal distributions. We then show how our approach can be adapted for modeling single-cell dynamics. Finally, we validate our method on a range of image restoration tasks—including inpainting, super-resolution, deblurring, and JPEG restoration—using the CIFAR-10 datasets. 

Once the drift function $\bu^\star(\bx,t)$ is obtained, the diffusion process (\ref{sde}) can be used to generate samples from the target distribution $\mu_1$ by transporting the initial distribution $\mu_0$ at $t=0$ to $\mu_1$ at $t=1$. Since the SDE (\ref{sde}) is defined over the finite interval $[0,1]$, we implement this procedure numerically using the Euler–Maruyama discretization. Specifically, we divide $[\epsilon,1-\epsilon]$ into $N \geq 2$ grid points, $\epsilon = t_0 < t_1 < \cdots < t_N = 1-\epsilon$ with step sizes $\delta_j = t_{j+1} - t_j$. To mitigate numerical instability near the boundaries $t=0$ and $t=1$, we truncate the interval using a small $\epsilon > 0$. The discretized update rule is given by
\begin{eqnarray}\label{euler}
\bx_{t_{j+1}} = \bx_{t_j} + \delta_j[\bb(\bx_{t_j},t_j) + \bu^\star(\bx_{t_j},t_j)] + \sigma(t_j)\sqrt{\delta_j}\bepsilon_j, \quad j = 0,1,\ldots,N-1,
\end{eqnarray}
where $\{\bepsilon_j\}_{j=1}^N$ are i.i.d. random vectors drawn from $N(0,\bI_d)$. The drift term $\bu^\star(\bx_{t_j},t_j)$ is estimated using either Algorithm~\ref{alg1} or Algorithm~\ref{alg2}.

\subsection{Learning low-dimensional distributions}\label{lowdim}

We first evaluate the capability of various methods to learn multi-modal distributions on low-dimensional datasets. Following the setup in \cite{shi2023diffusion,tong2024}, we consider two widely used benchmark examples to illustrate and compare different approaches. The first is the Moons dataset, sampled from a complex distribution supported on two disjoint, half-moon–shaped regions of equal mass \citep{pedregosa11a}. The second is the 8-Gaussians dataset, generated from a mixture of eight Gaussian components that are well separated in space. 

Consistent with the experimental configurations in \cite{tong2024}, we use 10,000 samples for training and another 10,000 for testing. We set the initial distribution $\mu_0$ and use $N = 100$ discretization steps. Each experiment is repeated five times, and we report both the mean and standard deviation across runs. Figure~\ref{figure0} shows samples generated by our TFSB method on the 8-Gaussians dataset, using standard normal (left) and Moons (right) starting distributions. The generated samples closely match the true distributions, successfully capturing all modes in both cases.
\begin{figure}[ht!]
    \vspace{-0.5cm}
		\begin{minipage}[t]{0.47\linewidth}
\centering
\includegraphics[width=1.08\textwidth]{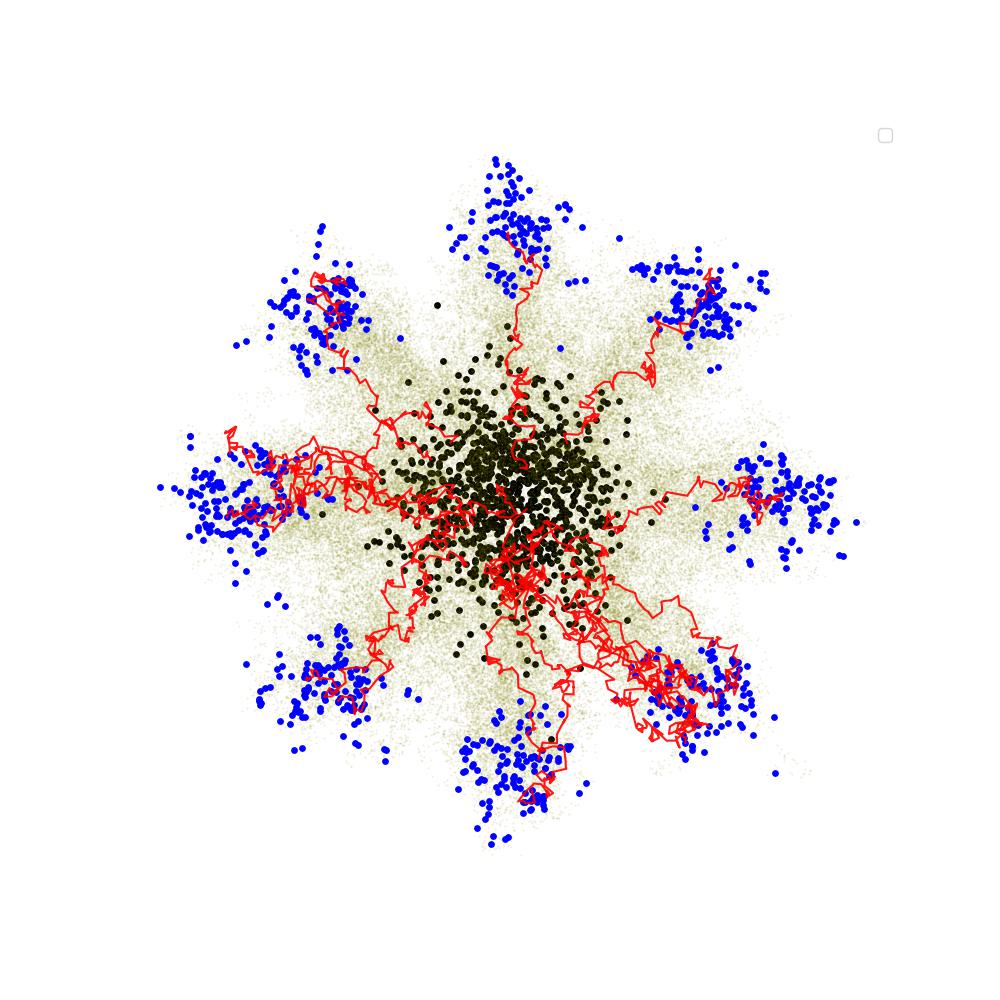}
		\end{minipage}
		\hspace{0.1cm}
		\begin{minipage}[t]{0.47\linewidth}
\centering
\includegraphics[width=1.08\textwidth]{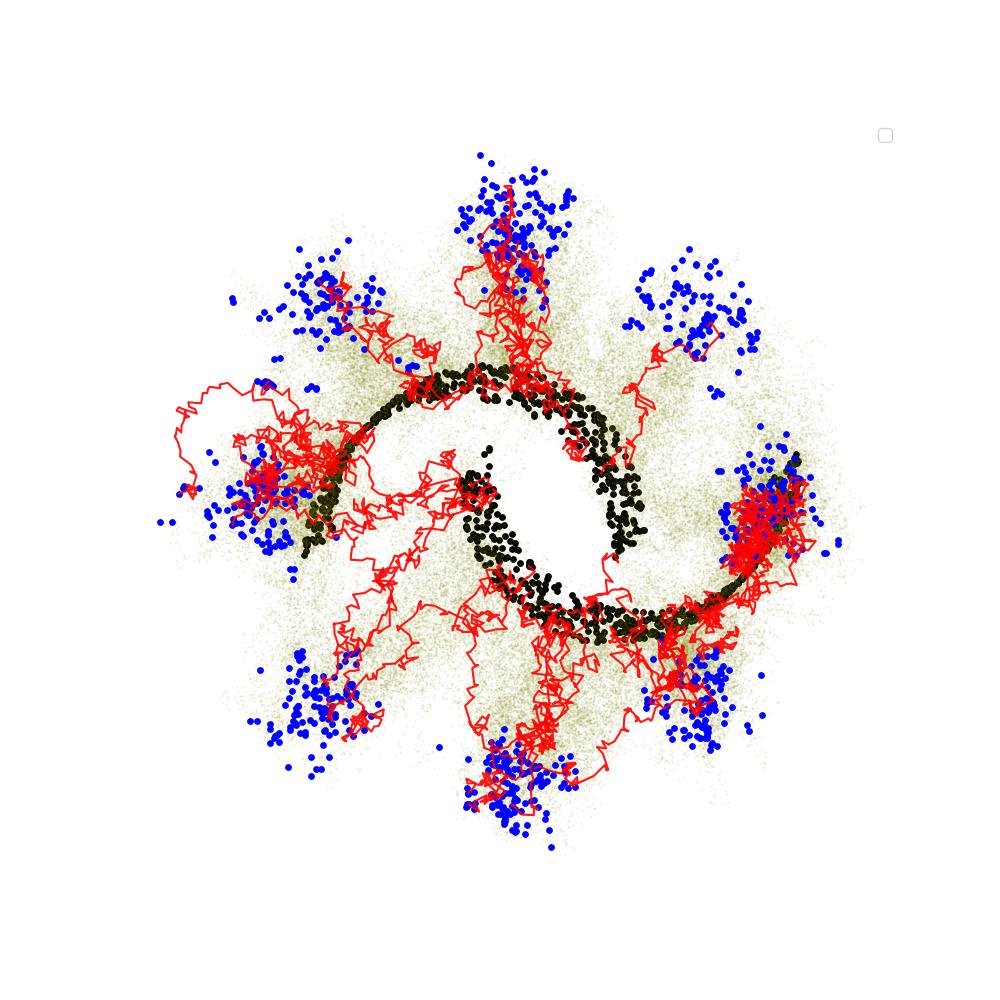}
		\end{minipage}
    \vspace{-0.5cm}
\caption{Visualization of the 8-Gaussians data generated from the standard normal (left) and the Moons dataset (right). Black points denote the starting samples, blue points represent the target samples generated by our TFSB method, and red curves illustrate the entire transport paths.}
\label{figure0}
\end{figure}

We evaluate the empirical 2-Wasserstein distance for 10,000 samples generated by our model. The reported value of ${\cal W}_2$ is defined as
\begin{eqnarray}\nn
{\cal W}2 = \left(\min_{\pi \in U(\hat{p}_1, q_1)} \int |\bx - \by|_2^2 d\pi(\bx, \by)\right)^{1/2},
\end{eqnarray}
where $\hat{p}_1$ denotes the samples generated by our algorithm, $q_1$ is the test set, and $U(\hat{p}_1, q_1)$ is the set of all couplings between $\hat{p}_1$ and $q_1$. We report the 2-Wasserstein distance between the predicted and target distributions, each based on 10,000 test samples. Table~\ref{table1} summarizes the comparison of our methods with the following baselines under three scenarios: ${\cal N}\rightarrow$8-Gaussians, ${\cal N}\rightarrow$Moons, and Moons$\rightarrow$8-Gaussians.
\begin{itemize}
\item Simulation-Free Schrödinger Bridges via Score and Flow Matching \citep{tong2024}, including the versions with exact optimal transport couplings ($[SF]^2$M-exact) and independent couplings ($[SF]^2$M-I);
\item Iterative Schrödinger Bridge Models: diffusion Schrödinger bridges (DSB), and diffusion Schrödinger bridge matching with algorithms based on iterative proportional fitting (DSBM-IPF) and iterative Markovian fitting (DSBM-IMF);
\item ODE Flow-Based Models: optimal transport conditional flow matching (OT-CFM), Schrödinger bridge conditional flow matching (SB-CFM), independent conditional flow matching (I-CFM), rectified flow (RF), and flow matching (FM).
\end{itemize}

Results obtained from SDE-based stochastic methods and ODE-based deterministic methods are presented in the top and bottom blocks of Table~\ref{table1}, respectively, following the official values reported by each baseline in \cite{tong2024}. Across all scenarios, our TFSB method consistently outperforms the SFSB one. For the Moons$\rightarrow$8-Gaussians task, TFSB achieves the best overall performance with a large margin over all competing methods. In the other two settings, TFSB ranks third, performing very close to the top method within the confidence interval. Overall, the results in Table~\ref{table1} demonstrate that TFSB is a highly competitive generative model for low-dimensional data.
\begin{table}
  \begin{center}
    \caption{Summary of generative modeling performance of various methods on two-dimensional datasets, evaluated using the 2-Wasserstein distance (${\cal W}_2$). Reported values are the mean and standard deviation computed over five independent runs.}
    \label{table1}
    \vspace{0.35cm}
    \begin{tabular}{c|c|c|c}\hline
      Algorithm&${\cal N}\rightarrow$8-Gaussians&${\cal N}\rightarrow$Moons&Moons$\rightarrow$8-Gaussians\\\hline
      TFSB&0.314\std{0.045}&0.138\std{0.013}&\textbf{0.350\std{0.029}}\\
      SFSB&0.632\std{0.046}&0.379\std{0.057}&0.590\std{0.065}\\
      $[SF]^2$M-exact \citep{tong2024}&\textbf{0.275\std{0.058}}&\textbf{0.124\std{0.023}}&0.726\std{0.137}\\
      $[SF]^2$M-I \citep{tong2024}&0.393\std{0.054}&0.185\std{0.028}&1.482\std{0.151}\\
      DSBM-IPF \citep{shi2023diffusion}&0.315\std{0.079}&0.140\std{0.006}&0.812\std{0.092}\\
      DSBM-IMF \citep{shi2023diffusion}&0.338\std{0.091}&0.144\std{0.024}&0.838\std{0.098}\\
      DSB \citep{debortoli2023diffusion}&0.411\std{0.084}&0.190\std{0.049}&0.987\std{0.324}\\\hline
      OT-CFM \citep{tong2024improving}&0.303\std{0.053}&0.130\std{0.016}&0.601\std{0.027}\\
      SB-CFM \citep{tong2024improving}&2.314\std{2.112}&0.434\std{0.594}&---\\
      RF \citep{liu2022rectified}&0.421\std{0.071}&0.283\std{0.045}&1.525\std{0.330}\\
      I-CFM \citep{tong2024improving}&0.373\std{0.103}&0.178\std{0.014}&1.557\std{0.407}\\
      FM \citep{lipman2023flow}&0.343\std{0.058}&0.209\std{0.055}&---\\\hline
    \end{tabular}
\end{center}
  \vspace{-0.25cm}
\end{table}

\subsection{Learning Cell Dynamics}\label{set2}

In this section, we demonstrate how TTSB and SFSB can be applied to modeling single-cell dynamics, a central problem in single-cell data science. Understanding these dynamics is key to characterizing, and ultimately manipulating, cellular programs underlying development and disease~\citep{lahnemann_eleven_2020}. Given time-resolved snapshot data—where each observation corresponds to a cell embedded in a gene expression state space—our goal is to infer the short-timescale transitions that drive cellular state changes~\citep{schiebinger_reconstructing_2021}.

We evaluate our methods on three real-world datasets following the setup of \citet{tong_trajectorynet_2020}, reporting results across data representations of varying dimensionalities in Tables~\ref{tab:sc-dynamics-fived} and~\ref{tab:sc-dynamics}. The CITE-seq and Multiome datasets, both from the NeurIPS 2022 Multimodal Single-cell Integration Challenge~\citep{burkhardt_multimodal_2022}, consist of CD34+ hematopoietic stem and progenitor cells measured at four timepoints (days 2, 3, 4, and 7). The Embryoid Body (EB) dataset contains five timepoints collected over 30 days; we use the preprocessed version containing the first 100 principal components~\citep{tong_trajectorynet_2020}.

Given $K$ unpaired cell distributions at $K$ timepoints, we solve a Schrödinger bridge problem between each successive pair. To evaluate interpolation, we perform leave-one-out prediction: for each omitted timepoint, we train on the remaining data, push forward the observed distribution at time $t-1$ to $t$, and compute the 1-Wasserstein distance to the held-out distribution.

We compare performance using the first 5, 50, or 100 whitened principal components. In the low-dimensional (5d) setting, TTSB and SFSB match the performance of the ODE-based OT-CFM and outperform other Schrödinger bridge approaches. Their advantage becomes more pronounced in higher dimensions, where they significantly outperform all baselines and remain stable while alternative methods degrade. All evaluations use the 1-Wasserstein distance between predicted and ground-truth cell populations.

\begin{table}[t]
  \begin{center}
\caption{Comparison of single-cell dynamics across three datasets using the 5-dimensional PCA representation. For each dataset, we leave out each intermediate timepoint in turn, impute its distribution by pushing forward the preceding timepoint, and compute the 1-Wasserstein distance to the held-out distribution, following \citet{tong_trajectorynet_2020}. }\vspace*{1em}
\label{tab:sc-dynamics-fived}
\begin{tabular}{@{}llll}
\toprule
Algorithm  $\downarrow$ | Dataset $\rightarrow$ &                       \multicolumn1c{Cite}&                          \multicolumn1c{EB} &                      \multicolumn1c{Multi}\\
\midrule
TFSB&0.882\std{0.020}&\textbf{0.791\std{0.014}}&1.017\std{0.014}\\
SFSB&\textbf{0.880\std{0.013}}&0.792\std{0.016}&1.029\std{0.020}\\
{[SF]}$^2$M-Geo&1.017\std{0.104}&0.879\std{0.148}&1.255\std{0.179}\\
{[SF]}$^2$M-Exact&0.920\std{0.049}&0.793\std{0.066}&\textbf{0.933\std{0.054}} \\
{[SF]}$^2$M-Sink&1.054\std{0.087}&1.198\std{0.342}&1.098\std{0.308} \\
DSBM \citep{shi2023diffusion}&1.705\std{0.160}&1.775\std{0.429}&1.873\std{0.631} \\
DSB \citep{debortoli2023diffusion}&0.953\std{0.140}&0.862\std{0.023}&1.079\std{0.117} \\
\midrule
OT-CFM \citep{tong2024improving}&\textbf{0.882\std{0.058}}&\textbf{0.790\std{0.068}}&\textbf{0.937\std{0.054}} \\
I-CFM \citep{tong2024improving}&0.965\std{0.111}&0.872\std{0.087}&1.085\std{0.099} \\
SB-CFM \citep{tong2024improving}&1.067\std{0.107}&1.221\std{0.380}&1.129\std{0.363} \\
\midrule
Reg. CNF \citep{finlay_how_2020}&---&0.825&--- \\
TrajectoryNet \citep{tong_trajectorynet_2020}&---&0.848&--- \\
NLSB \citep{koshizuka_neural_2022} & --- & 0.970 & ---\\
\bottomrule
\end{tabular}
\end{center}
\end{table}

\begin{table}[t]
 \caption{Leave-one-timepoint-out performance of dynamics interpolation methods. For each omitted timepoint, we predict its cell-state distribution from the preceding timepoint and compute the 1-Wasserstein distance to the ground truth. Results are reported for data represented using 50 and 100 principal components.}\vspace*{1em}
    \label{tab:sc-dynamics}
    \centering
\begin{tabular}{@{}llllll}
\toprule
 Dim. $\rightarrow$ &  \multicolumn{2}{c}{50} & \multicolumn{2}{c}{100}\\
 \cmidrule(lr){2-3}\cmidrule(lr){4-5}
 Alg.\ $\downarrow$ | Dataset $\rightarrow$ &                         \multicolumn1c{Cite} &                                                \multicolumn1c{Multi} &                         \multicolumn1c{Cite} &                                                 \multicolumn1c{Multi}\\
\midrule
TFSB                       & \textbf{6.35\std{0.01}} &    \textbf{6.49\std{0.01}} &  10.29\std{0.03} &  10.52\std{0.02} \\
SFSB                     & 6.45\std{0.02} &                  6.57\std{0.02} &           \textbf{10.24\std{0.03}} &                    \textbf{10.46\std{0.03}}\\
{[SF]}$^2$M-Geo                       & 38.52\std{0.29} &    44.80\std{1.91} &  44.50\std{0.42} & 52.20\std{1.96} \\
{[SF]}$^2$M-Exact                     & 40.01\std{0.78} &                  45.34\std{2.83} &           46.53\std{0.43} &                    52.89\std{1.99}\\
DSBM        & 53.81\std{7.74}  &  66.43\std{14.39} &  58.99\std{7.62} &    70.75\std{14.03}\\
\midrule
OT-CFM   & 38.76\std{0.40} &             47.58\std{6.62} &           45.39\std{0.42} &            54.81\std{5.86}\\
I-CFM    & 41.83\std{3.28} &    49.78\std{4.43} &  48.28\std{3.28} &  57.26\std{3.86} \\
\bottomrule
\end{tabular}
\end{table}

\subsection{Image restoration}\label{set3}

Image restoration is a fundamental problem in computer vision and image processing, with applications ranging from optimal filtering \citep{10165442} and data compression \citep{125072} to adversarial defense \citep{nie2022DiffPure} and safety-critical domains such as medical imaging and robotics \citep{LI2021106617}. Recent advances have leveraged diffusion and score-based generative models for restoration tasks; see, for example, \cite{saharia2022paletteimagetoimagediffusionmodels,kawar2022denoisingdiffusionrestorationmodels}. However, as noted by \cite{liu2023schrodinger}, most diffusion-based restoration methods initiate the generative process from Gaussian white noise, which contains little to no structural information about the underlying clean image distribution.

This observation highlights the advantage of Schrödinger bridge (SB) based diffusion models for image restoration: the degraded observation itself provides a structurally informative prior, in stark contrast to random noise. In particular, Theorem \ref{thm} shows that the SB dynamics admit analytic solutions when provided with boundary pairs (clean and degraded images). This leads directly to a training-free framework (TFSB) or a simulation-free framework (SFSB), both of which avoid the computational overhead typically associated with diffusion model training or trajectory simulation.

We evaluate our method on four standard restoration tasks—super-resolution, deblurring, inpainting, and JPEG artifact removal—using CIFAR-10 images of size $3 \times 32 \times 32$. For 4× super-resolution, we downsample images to $8 \times 8$ and then upsample them back to $32 \times 32$ so that the clean and degraded domains share the same resolution. Inpainting is performed using masks that remove the central 25\% of pixels. Deblurring is done using Gaussian blur. For JPEG artifact removal, we apply JPEG compression with a quality factor of 10.

Because TFSB directly constructs diffusion bridges between the clean and degraded domains, the resulting generative trajectories are more interpretable: the restored image emerges progressively from the degraded input, as illustrated in Figures \ref{figure1} and \ref{figure2}. This structural guidance also improves sampling efficiency, since the starting point of generation is already close to the target image manifold. Figures \ref{figure1} and \ref{figure2} visualize the full restoration trajectories obtained using TFSB under the reference VP-SDE (\ref{vpsde}), demonstrating successful recovery of the target images.

Finally, all experiments were conducted on a consumer-grade laptop without access to a dedicated GPU. This prevents direct comparison with GPU-based methods or scaling to larger datasets such as ImageNet ($3 \times 256 \times 256$). With sufficient computational resources, we expect comparable performance to \cite{liu2023schrodinger}, as their simulation-free training procedure can be viewed as a special case of our more general framework.

\begin{figure}[ht!]
    \vspace{0.5cm}
		\begin{minipage}[t]{0.45\linewidth}
\centering
\includegraphics[width=1.15\textwidth]{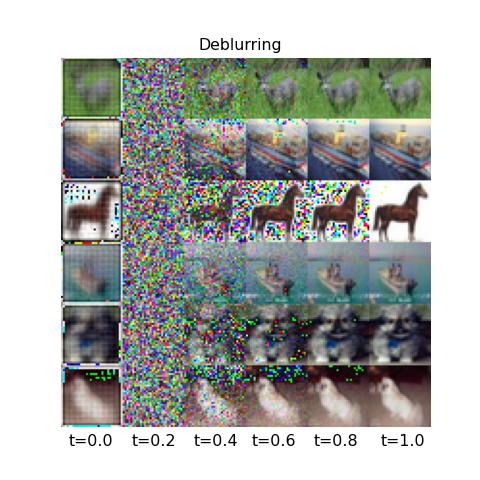}
		\end{minipage}
		\hspace{0cm}
		\begin{minipage}[t]{0.45\linewidth}
\centering
\includegraphics[width=1.15\textwidth]{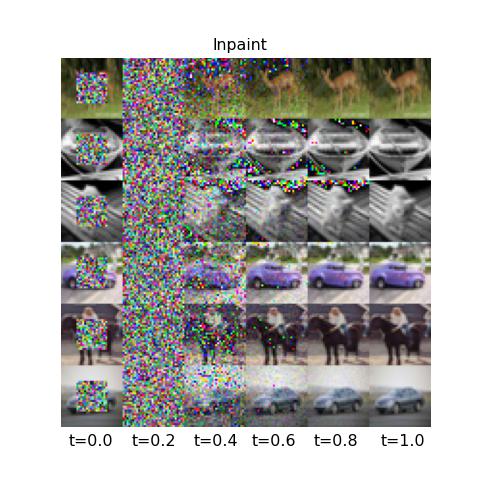}
		\end{minipage}
    \vspace{0.cm}
\caption{Visualization of the restoration trajectories generated by TFSB, demonstrating progressive evolution from degraded inputs to clean outputs. Left: Deblurring. Right: Inpainting.}
\label{figure1}
\end{figure}

\begin{figure}[ht!]
    \vspace{0.5cm}
		\begin{minipage}[t]{0.47\linewidth}
\centering
\includegraphics[width=1.15\textwidth]{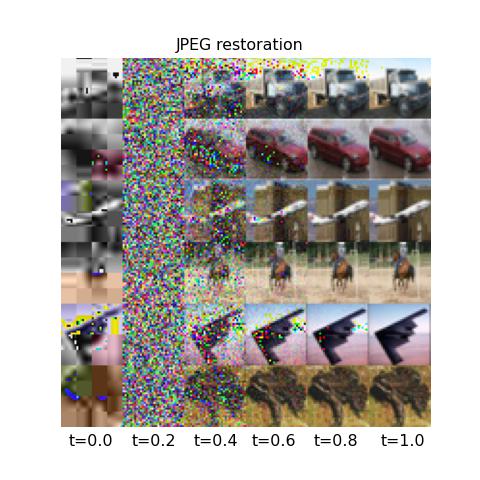}
		\end{minipage}
		\hspace{0.0cm}
		\begin{minipage}[t]{0.47\linewidth}
\centering
\includegraphics[width=1.15\textwidth]{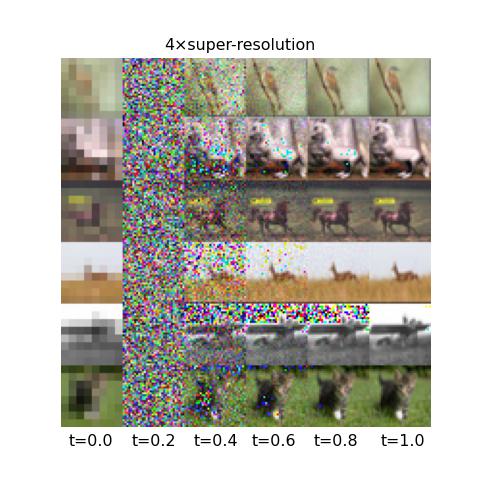}
		\end{minipage}
    \vspace{0.cm}
\caption{Visualization of the restoration trajectories generated by TFSB, demonstrating progressive evolution from degraded inputs to clean outputs. Left: JPEG artifact removal. Right: $4\times$ super-resolution.}
\label{figure2}
\end{figure}

\section{Conclusion}\label{conclusion}

In this work, we introduced a closed-form framework for learning the stochastic dynamics of Schrödinger bridge systems by deriving an explicit expression for the drift function of the associated SDE. This result provides a unified theoretical perspective that encompasses all previously known closed-form SB solutions. The formulation is notably simple, requiring only the transition probabilities of the reference process and integration with respect to the joint target distribution. Building on this insight, we developed a training-free algorithm (Algorithm 1) capable of learning stochastic dynamics from both paired and unpaired data. The method operates in a single step, avoiding neural network training and the associated challenges of architectural design, making it computationally efficient and straightforward to implement. Through an equivalent variational formulation, we further proposed a simulation-free procedure for inferring dynamics directly from samples drawn from arbitrary source and target distributions. These approaches prove particularly effective for image restoration and for modeling temporal evolution in single-cell genomic data. Experimental results on both low-dimensional multi-modal distributions and high-dimensional image datasets demonstrate that our method achieves competitive performance relative to state-of-the-art models while remaining simple, efficient, and computationally lightweight. Moreover, unlike previous simulation-free approaches which restrict the reference dynamics to Brownian motion, our framework accommodates general Itô reference processes.

A known limitation, however, is that the closed-form drift solution (\ref{drift}) tends to memorize samples, which constrains its ability to generate genuinely novel outputs \citep{scorebasedgenerativemodelsdetect}. To mitigate this issue, \cite{closedformdiffusionmodels} proposed smoothing the closed-form score using convolution with zero-mean noise together with efficient nearest-neighbor approximations. Their modification enables novel sample generation while retaining a training-free formulation. As future work, we intend to incorporate similar smoothing strategies into our framework to enhance generative capability. In addition, we plan to explore the integration of convolutional architectures from recent neural methods into our smoothing-based, training-free TFSB approach to further improve empirical performance.

Our formulation also suggests several promising extensions. One natural direction is to consider conditional settings in which the stochastic dynamics are learned between two marginal distributions conditioned on an auxiliary variable $\bz$. This can be achieved by introducing a neural function $\bu_{\btheta}(\bx,\bz,t)$ in (\ref{lhat}) to incorporate the conditional dependence, similar to conditional SGMs \citep{song2021scorebased} and conditional SB formulations \citep{HUANG2025105486}. Another direction is to generalize the approach to multi-marginal problems, following recent developments in multi-marginal Schrödinger bridges \citep{shen2025multimarginalschrodingerbridgesiterative}. We expect these avenues to broaden the applicability of our framework to a wider class of nonlinear diffusion models.

\appendix

\section{Static Formulation}\label{statica}
\begin{lemma}\label{lemma}
Let $\mu^\star$ denote the solution of the static Schrödinger Bridge problem, i.e. the joint distribution $\mu(\bx_0,\bx_1)$ on $\bR^d\times\bR^d$ that minimizes the Kullback–Leibler divergence to a given reference distribution $\bar{\mu}(\bx_0,\bx_1)$ subject to prescribed marginals $\mu_0$ and $\mu_1$. Formally,
\begin{equation}\label{opt}
\mu^\star=\text{argmin}_{\mu\in\Pi(\mu_0,\mu_1)} D(\mu|\tilde\mu),
\end{equation}
where $\Pi(\mu_0,\mu_1)$ is the set of couplings with marginals $\mu_0$ and $\mu_1$. Assume the reference transition density $q(0,\bx_0,1,\bx_1)$ is positive on the relevant domain. Then there exist strictly positive functions $\hat{\rho}_0,\rho_1:\bR^d\rightarrow(0,\infty)$ such that the optimal coupling factors as
\begin{equation}\label{factor}
\mu^\star(\bx_0,\bx_1)
= q(0,\bx_0,1,\bx_1)\hat{\rho}_0(\bx_0)\rho_1(\bx_1).
\end{equation}
Moreover, $\mu^\star$attains the prescribed marginals, which can be written in the form
\begin{equation}\label{marginal}
\mu_0(\bx_0)=\rho_0(\bx_0)\hat{\rho}_0(\bx_0),
\qquad
\mu_1(\bx_1)=\rho_1(\bx_1)\hat{\rho}_1(\bx_1),
\end{equation}
with the consistency relations
\begin{eqnarray}\nn
\rho_0(\bx_0)=\int q(0,\bx_0,1,\bx_1)\rho_1(\bx_1)d\bx_1,&&\hat{\rho}_1(\bx_1)=\int q(0,\bx_0,1,\bx_1)\hat{\rho}_0(\bx_0)d\bx_0.
\end{eqnarray}
\end{lemma}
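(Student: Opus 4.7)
The plan is to treat the static problem (\ref{opt}) as a convex variational program and derive the product factorization (\ref{factor}) from its first-order optimality conditions, then read off the marginal relations (\ref{marginal}) by direct integration. The argument parallels the classical Schr\"odinger system derivation in entropic optimal transport and can be viewed as an instance of Csisz\'ar's I-projection theorem specialized to fixed-marginal linear constraints on probability measures.

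First, I would introduce Lagrange multipliers $\lambda_0(\bx_0)$ and $\lambda_1(\bx_1)$ for the two marginal constraints $\int\mu(\bx_0,\bx_1)d\bx_1=\mu_0(\bx_0)$ and $\int\mu(\bx_0,\bx_1)d\bx_0=\mu_1(\bx_1)$, and form the Lagrangian
\begin{equation*}
\mathcal{L}(\mu,\lambda_0,\lambda_1)=\int\log\frac{\mu(\bx_0,\bx_1)}{\tilde\mu(\bx_0,\bx_1)}\,d\mu-\int\lambda_0(\bx_0)\Big(\int\mu\,d\bx_1-\mu_0\Big)d\bx_0-\int\lambda_1(\bx_1)\Big(\int\mu\,d\bx_0-\mu_1\Big)d\bx_1.
\end{equation*}
Since $D(\cdot\|\tilde\mu)$ is strictly convex on the convex set $\Pi(\mu_0,\mu_1)$, the constrained minimizer, when it exists, is unique. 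Computing the first variation with respect to $\mu(\bx_0,\bx_1)$ and setting it to zero gives $\log(\mu^\star/\tilde\mu)+1-\lambda_0(\bx_0)-\lambda_1(\bx_1)=0$, so that $\mu^\star$ takes the Gibbs form
\begin{equation*}
\mu^\star(\bx_0,\bx_1)=\tilde\mu(\bx_0,\bx_1)\exp\{\lambda_0(\bx_0)+\lambda_1(\bx_1)-1\}.
\end{equation*}

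Second, writing the reference as $\tilde\mu(\bx_0,\bx_1)=\mu_0(\bx_0)q(0,\bx_0,1,\bx_1)$ (as used throughout Section~\ref{method}), I would define
\begin{equation*}
\hat\rho_0(\bx_0):=\mu_0(\bx_0)\exp\{\lambda_0(\bx_0)-1\},\qquad \rho_1(\bx_1):=\exp\{\lambda_1(\bx_1)\},
\end{equation*}
which immediately yields the product factorization (\ref{factor}). The consistency relations then follow by integrating (\ref{factor}): marginalizing over $\bx_1$ gives $\mu_0(\bx_0)=\hat\rho_0(\bx_0)\int q(0,\bx_0,1,\bx_1)\rho_1(\bx_1)d\bx_1=\hat\rho_0(\bx_0)\rho_0(\bx_0)$, and symmetrically over $\bx_0$ for $\mu_1(\bx_1)=\rho_1(\bx_1)\hat\rho_1(\bx_1)$.

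The main obstacle is the rigorous justification of existence, uniqueness, and strict positivity of the potentials, since the Lagrangian derivation above is formal (the multipliers are only guaranteed to exist once a suitable duality theory is invoked). One can make the argument rigorous either via Csisz\'ar's I-projection theorem, which guarantees that the unique minimizer of a KL divergence over a linear convex set of probability measures lies in the exponential family generated by the constraint functions---producing exactly the product structure (\ref{factor})---or via a direct Sinkhorn-type fixed-point argument: the consistency relations (\ref{marginal}) define a coupled iteration whose convergence to strictly positive potentials, unique up to the rescaling $(\hat\rho_0,\rho_1)\mapsto(c\hat\rho_0,c^{-1}\rho_1)$, follows from contraction in the Hilbert projective metric under the positivity assumption on $q(0,\bx_0,1,\bx_1)$.
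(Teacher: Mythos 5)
Your proposal is correct and follows essentially the same route as the paper's proof: a Lagrangian first-variation argument yielding the Gibbs product form $\mu^\star=\bar{\mu}\exp\{\cdot\}$, followed by absorbing the reference factorization $\bar{\mu}=\mu_0 q$ into the potentials and integrating to obtain the marginal consistency relations. The paper stops at the formal multiplier computation, so your closing remarks on rigorously establishing existence and strict positivity of the potentials (via Csisz\'ar's I-projection theorem or a Sinkhorn fixed-point argument) supply a justification the paper itself leaves implicit.
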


\begin{proof}
The optimization problem (\ref{opt}) can be written as:
\begin{eqnarray}\label{ot}
\mu^\star(\bx_0,\bx_1)=\text{argmin}_{\mu\in\Pi(\mu_0,\mu_1)}\int\left\{\log[\mu(\bx_0,\bx_1)]-\log[q(0,\bx_0,1,\bx_1)]\right\}\mu(d\bx_0,d\bx_1).
\end{eqnarray}
Introduce the constrained minimization \eqref{ot}  by adjoining dual potentials (Lagrange multipliers) $\lambda_0(\bx_0)$ and $\lambda_1(\bx_1)$ to form the Lagrangian by
\begin{eqnarray}\nn
{\cal L}(\mu,\mu_0,\mu_1)&=&\int\left\{\log[\mu(\bx_0,\bx_1)]-\log[\bar{\mu}(\bx_0,\bx_1)]\right\}\mu(d\bx_0,d\bx_1)\\\nn
&&+\int\lambda_0(\bx_0)\left(\int\mu(\bx_0,\bx_1)d\bx_1-\mu_0(\bx_0)\right)d\bx_0\\\label{lagrang}
&&+\int\lambda_1(\bx_1)\left(\int\mu(\bx_0,\bx_1)d\bx_0-\mu_1(\bx_1)\right)d\bx_1.
\end{eqnarray}
Then, differentiating the Lagrangian with respect to $\mu(\bx_0,\bx_1)$ yields the first-order optimality condition
\begin{eqnarray}\nn
\log\mu^\star(\bx_0,\bx_1)-\log[\bar{\mu}(\bx_0,\bx_1)]+1+\lambda_0(\bx_0)+\lambda_1(\bx_1)=0.
\end{eqnarray}
Rearranging, we get
\begin{eqnarray}\nn
\mu^\star(\bx_0,\bx_1)=\mu_0(\bx_0)q(0,\bx_0,1,\bx_1)\exp(-\lambda_0(\bx_0)-\lambda_1(\bx_1)-1),
\end{eqnarray}
which we can re-express in terms of the auxiliary potentials $\hat{\rho}_0,\rho_1$:
\begin{eqnarray}\nn
\mu^\star(\bx_0,\bx_1)=\hat{\rho}_0(\bx_0)q(0,\bx_0,1,\bx_1)\rho_1(\bx_1),
\end{eqnarray}
satisfying
\begin{eqnarray}\nn
\hat{\rho}_0(\bx_0)\int q(0,\bx_0,1,\bx_1)\rho_1(\bx_1)d\bx_1=\mu_0(\bx_0),\\\nn
\rho_1(\bx_1)\int q(0,\bx_0,1,\bx_1)\hat{\rho}_0(\bx_0)d\bx_0=\mu_1(\bx_1),
\end{eqnarray}
where we re-label the terms with the integrals to
\begin{eqnarray}\nn
\rho_0(\bx_0)=\int q(0,\bx_0,1,\bx_1)\rho_1(\bx_1)d\bx_1,\\\nn
\hat{\rho}_1(\bx_1)=\int q(0,\bx_0,1,\bx_1)\hat{\rho}_0(\bx_0)d\bx_0.
\end{eqnarray}
These relations are equivalently written as
\begin{eqnarray}\nn
\hat{\rho}_0(\bx_0)\rho_0(\bx_0)=\mu_0(\bx_0),\\\nn
\hat{\rho}_1(\bx_1)\rho_1(\bx_1)=\mu_1(\bx_1),
\end{eqnarray}
which is the Schrödinger system (or Sinkhorn scaling equations) for the pair $(\hat{\rho}_0,\rho_1)$.
\end{proof}

\medskip\noindent\textbf{Remark.}
The scaling functions $\hat{\rho}_0$ and $\rho_1$ are determined only up to the multiplicative normalization $(\hat{\rho}_0,\rho_1)\rightarrow(c\hat{\rho}_0,c^{-1}\rho_1)$ for $c>0$, which leaves the coupling $\mu^\star$ invariant. Given any choice of $\hat{\rho}_0$ and $\rho_1$, define for each $t\in[0,1]$ the functions
\begin{align}\label{rhot}
\rho_t(\bx)
&= \int q(t,\bx,1,\bx_1),\rho_1(\bx_1),d\bx_1,\\\label{rhot1}
\hat{\rho}_t(\bx)
&= \int q(0,\bx_0,t,\bx),\hat{\rho}_0(\bx_0),d\bx_0,
\end{align}
where $q(s,\bx_s,t,\bx_t)$ denotes the transition density of the reference process (\ref{sde0}). By standard properties of transition kernels, $\hat{\rho}_t(\bx_t)$ satisfies the forward Fokker–Planck equation associated with (\ref{sde0}):
\begin{eqnarray}\label{forward}
\partial_t \hat{\rho}_t(\bx_t)=-\nabla_{\bx_t}\cdot\{\bb(\bx_t,t)\hat{\rho}_t(\bx_t)\}+\frac{\sigma(t)^2}{2}\bigtriangleup \hat{\rho}_t(\bx_t),
\end{eqnarray}
while $\rho_t(\bx_t)$ atisfies the corresponding backward Kolmogorov (backward Fokker–Planck) equation:
\begin{eqnarray}\label{backward}
\partial_t\rho_t(\bx_t)=-\{\nabla_{\bx_t}\rho_t(\bx_t)\}\cdot\bb(\bx_t,t)-\frac{\sigma(t)^2}{2}\bigtriangleup\rho_t(\bx_t).
\end{eqnarray}
Set $p_t(\bx_t)=\hat{\rho}_t(\bx_t)\rho_t(\bx_t)$, A direct computation using \eqref{forward} and \eqref{backward} shows that $p_t(\bx_t)$ satisfies the forward Fokker–Planck equation
\begin{eqnarray}\nn
\partial_t p_t(\bx_t)=-\nabla_{\bx_t}\cdot[\{\bb(\bx_t,t)+\sigma(t)^2\nabla_{\bx_t}\log\rho_t(\bx_t)\}p_t(\bx_t)]+\frac{\sigma(t)^2}{2}\bigtriangleup p_t(\bx_t).
\end{eqnarray}
Consequently, $p_t$ is the time-$t$ marginal of the diffusion process governed by the stochastic differential equation
\begin{eqnarray}\label{sdes}
d\bx_t=[\bb(\bx_t,t)+\sigma(t)^2\nabla_{\bx_t}\log\rho_t(\bx_t)] dt+\sigma(t) d\bw_t,~~\bx_0=\mu_0.
\end{eqnarray}
In other words, the drift correction $\sigma(t)^2\nabla_{\bx_t}\log\rho_t(\bx_t)$ transforms the reference dynamics into the optimal dynamics whose marginals are given by $p_t=\hat{\rho}_t\rho_t$.

\section{Proof of Theorem \ref{thm}}\label{proof}

To prove Theorem \ref{thm}, we begin by reformulating the KL divergence $D(\mQ|\bP)$ so as to disentangle the contribution of the terminal distribution $\mu(\bx_0,\bx_1)$ from that of the path measures. This separation is made possible by the disintegration theorem, which decomposes the divergence into one part depending only on the terminal distributions and another part depending on the conditional path laws. We then relate this decomposition to the drift and diffusion terms of the SDEs in (\ref{sde0}) and (\ref{sde}) via Girsanov’s theorem.

Recall that $\mQ_{01}(\bx_0,\bx_1)$ denotes the terminal distribution of $\mQ$. Let $\mQ^{\bx_0\bx_1}=\mQ((\bx_t)_{t\in(0,1)}|\bx_0,\bx_1)$ denote the conditional law of the trajectory given the endpoint $\bx_0,\bx_1$ under $\mQ$. Then $\mQ((\bx_t)_{t\in[0,1]})=\mQ^{\bx_0\bx_1}\mQ_{01}(\bx_0,\bx_1)$. Analogously, under the reference measure $\bP$, we define $\bP^{\bx_0\bx_1}=\bP((\bx_t)_{t\in(0,1)}|\bx_0,\bx_1)$. If $\mQ\ll\bP$ (so $D(\mQ|\bP)<\infty$), the disintegration theorem yields
\begin{eqnarray}\label{kld}
D(\mQ|\bP)=D(\mQ_{01}|\bP_{01})+\int_{\bR^d\otimes\bR^d}\mQ_{01}(d\bx_0,d\bx_1)D(\mQ^{\bx_0\bx_1}|\bP^{\bx_0\bx_1}).
\end{eqnarray}

Denote by $\mu$ and $\bar{\mu}$ the terminal joint laws of $\mQ$ and $\bP$, respectively, i.e., $\mQ_{01}(\bx_0,\bx_1)=\mu(\bx_0,\bx_1)$ and $\bP_{01}(\bx_0,\bx_1)=\bar{\mu}(\bx_0,\bx_1)$. Since $\mQ \in \mathcal{D}(\mu_0,\mu_1)$, the marginals of $\mu$ are $\mu_0$ and $\mu_1$. The marginal distribution of $\bar{\mu}$ with respect to $\bx_0$ is $\mu_0$, and it admits the factorization $\bar{\mu}(\bx_0,\bx_1)=\mu_0(\bx_0)q(0,\bx_0,1,\bx_1)$. Substituting into (\ref{kld}) gives
\begin{eqnarray}\nn
D(\mQ|\bP)=D(\mu|\bar{\mu})+\int_{\bR^d\otimes\bR^d}\mu(d\bx_0,d\bx_1)D(\mQ^{\bx_0\bx_1}|\bP^{\bx_0\bx_1})\ge D(\mu|\bar{\mu}).
\end{eqnarray}
This inequality has a clear interpretation: the divergence between $\mQ$ and $\bP$ is always at least as large as the divergence between their terminal distributions. Equality holds only when $\mQ^{\bx_0\bx_1}=\bP^{\bx_0\bx_1}$ for $\mu$-almost every $\bx_0,\bx_1$, i.e. when the conditional path laws coincide. It follows that
\begin{eqnarray}\label{minimizer}
\inf_{\mQ\in{\cal D}(\mu_0,\mu_1)}D(\mQ|\bP)=D(\mu^\star|\bar{\mu}),
\end{eqnarray}
with minimizer given by $\mQ^\star=\mQ^{^\star\bx_0\bx_1}\mu^\star(\bx_0,\bx_1)$, where $\mQ^{^\star\bx_0\bx_1}=\bP((\bx_t)_{t\in(0,1)}|\bx_0,\bx_1)$ and $\mu^\star$ is the solution to the static SB problem
\begin{eqnarray}\nn
\mu^\star=\inf_{\mu\in\Pi(\mu_0,\mu_1)}D(\mu|\bar{\mu}),
\end{eqnarray}
with $\Pi(\mu_0,\mu_1)$ denotes the set of all couplings between $\mu_0$ and $\mu_1$. In other words, the optimal $\mQ^\star$ is a $\mu^\star$-weighted mixture of reference trajectories. This representation often admits a closed-form expression and highlights how the target distribution $\mu^\star$ reweights the reference process.

According to Lemma \ref{lemma}, there exists a pair of functions $\hat{\rho}_0$ and $\rho_1$ on $\bR^d$ such that the measure $\mu^\star$ on $\bR^d\times\bR^d$ can be expressed as 
\begin{eqnarray}\label{paired}
\mu^\star(\bx_0,\bx_1)=q(0,\bx_0,1,\bx_1)\hat{\rho}_0(\bx_0)\rho_1(\bx_1)
\end{eqnarray}
which has marginals $\mu_0$ and $\mu_1$, i.e. 
\begin{eqnarray}\label{marginal}
\mu_0(\bx_0)=\rho_0(\bx_0)\hat{\rho}_0(\bx_0),&&\mu_1(\bx_1)=\rho_1(\bx_1)\hat{\rho}_1(\bx_1),
\end{eqnarray}
where 
\begin{eqnarray}\nn
\rho_0(\bx_0)=\int q(0,\bx_0,1,\bx_1)\rho_1(\bx_1)d\bx_1,&&\hat{\rho}_1(\bx_1)=\int q(0,\bx_0,1,\bx_1)\hat{\rho}_0(\bx_0)d\bx_0.
\end{eqnarray}

Recall that the reference path measure $\bP$ corresponds to the solution of the SDE
\begin{eqnarray}\label{sden}
d\bx_t=\bb(\bx_t,t)dt+\sigma(t)d\bw_t, \qquad \bx_0\sim\mu_0.
\end{eqnarray}
The H\"{o}lder condition imposed on $\bb(\bx,t)$ guarantees that the drift is globally Lipschitz, which ensures that this SDE has a unique strong solution \citep{10.5555/129416}. 

To construct the optimal $\mQ$, we posit a gradient ansatz: assume that $\mQ$ is induced by
\begin{eqnarray}\label{sde2}
d\bx_t=\big[\bb(\bx_t,t)-\sigma(t)^2\nabla \phi(\bx_t,t)\big]dt+\sigma(t)d\bw_t,\qquad \bx_0\sim\mu_0,
\end{eqnarray}
for some smooth potential $\phi:\bR^d\times[0,1]\to\bR$. The role of $\phi$ is to steer the paths so that the law of $\mQ$ matches $\mQ^\star$, the minimizer in (\ref{minimizer}).
Because (\ref{sden}) and (\ref{sde2}) share the same diffusion term, Girsanov’s theorem applies. The Radon–Nikodym derivative of $\mQ$ with respect to $\bP$ is
\begin{eqnarray}\label{rnd}
\frac{d\mQ}{d\bP}=\exp\left(-\tfrac{1}{2}\int_0^1\sigma(t)^2|\nabla\phi(\bx_t,t)|^2dt-\int_0^1\sigma(t)\nabla\phi(\bx_t,t)\cdot d\bw_t\right).
\end{eqnarray}

Applying Itô’s lemma to $\phi(\bx_t,t)$, we can rewrite the stochastic integral in (\ref{rnd}) and arrive at
\begin{eqnarray}\nn
d\phi(\bx_t,t)=\left(\partial_t\phi(\bx_t,t)+\nabla \phi(\bx_t,t)\cdot \bb(\bx_t,t)+\frac{\sigma(t)^2}{2}\bigtriangleup \phi(\bx_t,t)\right)dt+\sigma(t)\nabla \phi(\bx_t,t)\cdot d\bw_t.
\end{eqnarray}
where $\bigtriangleup$ is the Laplacian. Integrating and rearranging, we can express the stochastic integral as
\begin{eqnarray}\nn
&&-\int_0^1\sigma(t)\nabla \phi(\bx_t,t)\cdot d\bw_t\\\nn
&=&\int_0^1\left(\partial_t\phi(\bx_t,t)+\nabla \phi(\bx_t,t)\cdot\bb(\bx_t,t)+\frac{\sigma(t)^2}{2}\bigtriangleup \phi(\bx_t,t)\right)dt+\phi(\bx_0,0)-\phi(\bx_1,1).
\end{eqnarray} 
Substituting back into (\ref{rnd}) gives
\begin{eqnarray}\nn
\frac{d\mQ}{d\bP}&=&\exp\left\{\phi(\bx_0,0)-\phi(\bx_1,1)\right.\\\nn
&&\left.+\int_0^1\left(\partial_t\phi(\bx_t,t)+\nabla \phi(\bx_t,t)\cdot\bb(\bx_t,t)+\frac{\sigma(t)^2}{2}\bigtriangleup \phi(\bx_t,t)-\frac{\sigma(t)^2}{2}\|{\nabla} \phi(\bx_t,t)\|^2\right)dt\right\}.
\end{eqnarray}
Choosing $\phi$ to satisfy the PDE
\begin{eqnarray}\label{pde}
\partial_t\phi+\nabla \phi\cdot\bb+\tfrac{\sigma(t)^2}{2}\Delta\phi-\tfrac{\sigma(t)^2}{2}|\nabla \phi|^2=0
\end{eqnarray}
yields 
\begin{eqnarray}\label{derivative}
\frac{d\mQ}{d\bP}=\exp\{\phi(\bx_0,0)-\phi(\bx_1,1)\}. 
\end{eqnarray}
The PDE (\ref{pde}) is nonlinear because of the quadratic term $|\nabla \phi|^2$. To handle this difficulty, we apply the Cole–Hopf transformation \citep{hopf1950}, a classical device for linearizing such equations. Specifically, we set $\phi(\bx,t)=-\log h(\bx,t)$. Substituting this relation into (\ref{pde}) and simplifying, one finds that the nonlinear term cancels out, leaving the linear parabolic PDE
\begin{eqnarray}\label{lpde}
\partial_t h + \nabla h \cdot \bb + \tfrac{\sigma(t)^2}{2}\Delta h = 0,
\end{eqnarray}
with the boundary conditions $h(\bx_0,0)=\exp\{-\phi(\bx_0,0)\}$ and $h(\bx_1,1)=\exp\{-\phi(\bx_1,1)\}$. Thus the nonlinear PDE (\ref{pde}) with terminal conditions for $\phi$ is equivalently reformulated as the linear PDE (\ref{lpde}) with terminal conditions for $h$. 

Observe that (\ref{lpde}) coincides with the backward Fokker–Planck equation (\ref{backward}). Accordingly, we consider the solution $h(\bx,t)=\rho_t(\bx)$ subject to the boundary conditions $h(\bx_0,0)=\rho_0(\bx_0)$ and $h(\bx_1,1)=\rho_1(\bx_1)$. Substituting this expression into (\ref{derivative}) and applying (\ref{paired}) yields
\begin{eqnarray}\nn
\frac{d\mQ}{d\bP}
= \frac{\rho_1(\bx_1)}{\rho_0(\bx_0)}
= \frac{q(0,\bx_0,1,\bx_1)\hat{\rho}_0(\bx_0)\rho_1(\bx_1)}
{q(0,\bx_0,1,\bx_1)\hat{\rho}_0(\bx_0)\rho_0(\bx_0)}
= \frac{\mu^\star}{\bar{\mu}}(\bx_0,\bx_1),
\end{eqnarray}
from which it follows that $D(\mQ^\star\|\bP)=D(\mu^\star\|\bar{\mu})$. Consequently, the stochastic differential equation (\ref{sde2}) with drift correction determined by $\phi(\bx,t)=-\log\rho_t(\bx)$ realizes the optimal law $\mQ^\star$.

The partial differential equation (\ref{lpde}) can also be solved directly, as it corresponds to the backward Fokker–Planck equation associated with the stochastic differential equation
\begin{eqnarray}\nn
d\bx_t = \bb(\bx_t,t),dt + \sigma(t),d\bw_t,
\qquad \bx_1 \sim \rho_1.
\end{eqnarray}
Applying Itô’s lemma to $h(\bx_t,t)$ yields
\begin{eqnarray}\nn
dh(\bx_t,t)&=&\partial_th(\bx_t,t)+\nabla h(\bx_t,t)\cdot\bb(\bx_t,t)+\frac{\sigma(t)^2}{2}\bigtriangleup h(\bx_t,t)+\sigma(t)\nabla h(\bx_t,t)\cdot d\bw_t\\\nn
&=&\sigma(t)\nabla h(\bx_t,t)\cdot d\bw_t.
\end{eqnarray}	 
where the last equality follows from the fact that $h$ satisfies (\ref{lpde}). Integrating from $t$ to 1 gives
\begin{eqnarray}\nn
h(\bx_1,1)-h(\bx_t,t)&=&\int_t^1\sigma(t)\nabla h(\bx_t,t)\cdot d\bw_t.
\end{eqnarray}	 
Taking conditional expectations under $\bP$ leads to the Feynman–Kac representation: 
\begin{eqnarray}\label{solution}
h(\bx,t)=\mE_{\bP}\left[\rho_1(\bx_1)|\bx_t=\bx\right]=\int q(t,\bx,1,\bx_1)\rho_1(\bx_1)d\bx_1=\rho_t(\bx).
\end{eqnarray}	 
The boundary condition is then verified as   
\begin{eqnarray}\nn
h(\bx_0,0)=\mE_{\bP}\left[\rho_1(\bx_1)|\bx_0=\bx_0\right]=\int q(0,\bx_0,1,\bx_1)\rho_1(\bx_1)d\bx_1=\rho_0(\bx_0),
\end{eqnarray}	 
confirming consistency. Transforming back to $\phi$ yields the corresponding drift term in (\ref{sde2})
\begin{eqnarray}\nn
\bu^\star(\bx,t)&=&\sigma(t)^2\nabla\log\rho_t(\bx)=\frac{\sigma(t)^2\int\nabla_{\bx}q(t,\bx,1,\bx_1)\rho_1(\bx_1)d\bx_1}{\rho_t(\bx)}\\\nn
&=&\frac{\sigma(t)^2\int\nabla_{\bx}q(t,\bx,1,\bx_1)\rho_1(\bx_1)d\bx_1\hat{\rho}_t(\bx)}{\rho_t(\bx)\hat{\rho}_t(\bx)}\\\nn
&=&\frac{\sigma(t)^2\int\nabla_{\bx}q(t,\bx,1,\bx_1)\rho_1(\bx_1)d\bx_1\int q(0,\bx_0,t,\bx)\hat{\rho}_0(\bx_0)d\bx_0}{\int q(t,\bx,1,\bx_1)\rho_1(\bx_1)d\bx_1\int q(0,\bx_0,t,\bx)\hat{\rho}_0(\bx_0)d\bx_0}\\\nn
&=&\frac{\sigma(t)^2\int\int\{\nabla_{\bx}q(t,\bx,1,\bx_1)\}q(0,\bx_0,t,\bx)\frac{\mu(\bx_0,\bx_1)}{q(0,\bx_0,1,\bx_1)}d\bx_0d\bx_1}{\int\int q(t,\bx,1,\bx_1)q(0,\bx_0,t,\bx)\frac{\mu(\bx_0,\bx_1)}{q(0,\bx_0,1,\bx_1)}d\bx_0d\bx_1}\\\label{finalu}
&=&\frac{\sigma(t)^2\int\int s_{\bx_1}(\bx,t)g_t(\bx,\bx_0,\bx_1)\mu(\bx_0,\bx_1)d\bx_0d\bx_1}{\int\int g_t(\bx,\bx_0,\bx_1)\mu(\bx_0,\bx_1)d\bx_0d\bx_1},
\end{eqnarray}	 
which coincides with the expression in (\ref{drift0}). Here, $\hat{\rho}_t(\bx)$ is defined in (\ref{rhot1}), (\ref{paired}) has been applied, and $g_t$ enotes the transition kernel defined in (\ref{transition}).

\section{Proof of Corollary \ref{coro}}\label{Gaussian}
We begin with the well-known closed-form solution of a linear stochastic differential equation (SDE). Consider the reference SDE
\begin{eqnarray}\label{linear}
d\bx_t = {c(t)\bx_t + \alpha(t)}dt + \sigma(t),d\bw_t,
\end{eqnarray}
which admits an explicit solution. Define the auxiliary functions
\begin{eqnarray}\nn
\tau(t) = \exp\left(\int_0^t c(s)ds\right), \quad
\tau_1(t) = \exp\left(\int_t^1 c(s)ds\right) = \frac{\tau(1)}{\tau(t)},\\\nn
\kappa(t,t) = \tau(t)^2 \int_0^t \frac{\sigma(s)^2}{\tau(s)^2}ds, \quad
\kappa_1(t,t) = \tau(1)^2 \int_t^1 \frac{\sigma(s)^2}{\tau(s)^2}ds,\\\nn
\zeta(t) = \tau(t)\int_0^t \frac{\alpha(s)}{\tau(s)}ds, \quad
\zeta_1(t) = \tau(1)\int_t^1 \frac{\alpha(s)}{\tau(s)}ds.
\end{eqnarray}
Note that these quantities satisfy the useful relations
\begin{eqnarray}\nn
\frac{\zeta_1(t)}{\tau(1)} = \frac{\zeta(1)}{\tau(1)} - \frac{\zeta(t)}{\tau(t)}, \quad
\frac{\kappa(1,1)}{\tau(1)^2} - \frac{\kappa(t,t)}{\tau(t)^2} = \frac{\kappa_1(t,t)}{\tau(1)^2}.
\end{eqnarray}
The solution to (\ref{linear}) is then given by \citep{PlatenEckhard2010NSoS}
\begin{eqnarray}\nn
\bx_t = \tau(t)\left(\bx_0 + \int_0^t \frac{\alpha(s)}{\tau(s)}ds + \int_0^t \frac{\sigma(s)}{\tau(s)}d\bw_s\right).
\end{eqnarray}
Consequently, the transition kernel of the process takes the Gaussian form
\begin{eqnarray}\nn
q(0,\bx_0,t,\bx) &=& N(\bx;,\eta(t),\kappa(t,t)\bI_d),\\\label{gkernel}
q(0,\bx_0,1,\bx_1) &=& N(\bx_1;,\eta(1),\kappa(1,1)\bI_d),\\\nn
q(t,\bx,1,\bx_1) &=& N(\bx_1;,\eta_1(t),\kappa_1(t,t)\bI_d),
\end{eqnarray}
where the conditional means are
\begin{eqnarray}\nn
\eta(t) = \mE[\bx_t \mid \bx_0] = \tau(t)\bx_0 + \zeta(t), \quad
\eta_1(t) = \mE[\bx_1 \mid \bx_t = \bx] = \tau_1(t)\bx + \zeta_1(t),
\end{eqnarray}
and the conditional covariances are
\begin{eqnarray}\nn
\mE[(\bx_t - \eta(t))(\bx_t - \eta(t))^\top \mid \bx_0] = \kappa(t,t)\bI_d, \quad
\mE[(\bx_1 - \eta_1(t))(\bx_1 - \eta_1(t))^\top \mid \bx_t = \bx] = \kappa_1(t,t)\bI_d.
\end{eqnarray}
Finally, the conditional score function is given by
\begin{eqnarray}\label{cscore}
s_{\bx_1}(\bx,t) = \nabla_{\bx}\log q(t,\bx,1,\bx_1)
= \frac{\tau_1(t)\{\bx_1 - \tau_1(t)\bx - \zeta_1(t)}\}{\kappa_1(t,t)}.
\end{eqnarray}

Next, we derive the solution to the static Schrödinger Bridge (SB) problem (\ref{ot}) between two given Gaussian distributions.
When the drift and time-dependent coefficients vanish, i.e., $c(t)=0,\alpha(t)=0,\sigma(t)=\sigma$, the dynamic problem reduces to the static entropy-regularized optimal transport (OT) problem
\begin{eqnarray}\label{ot0}
\mu^\star(\bx_0,\bx_1)=\text{argmin}_{\mu\in\Pi(\mu_0,\mu_1)}\left\{\int\|\bx_1-\bx_0\|^2\mu(d\bx_1,d\bx_1)+2\sigma^2\log\mu d\mu\right\}.
\end{eqnarray}
Here, the first term represents the quadratic transport cost, while the second term corresponds to the entropy regularization controlled by the diffusion scale $\sigma^2$.
Let the marginals be Gaussian, $\mu_0=N(\nu_0,\bSigma_0)$ and $\mu_1=N(\nu_1,\bSigma_1)$. For this setting, \citet{mallasto2020entropyregularized2wassersteindistancegaussian} provided a closed-form analytical expression for the optimal coupling.
Define
\begin{eqnarray}\label{dc}
D_\sigma = \big(4\bSigma_0^{1/2}\bSigma_1\bSigma_0^{1/2} + \sigma^4\bI_d\big)^{1/2},
\qquad
C_\sigma = \tfrac{1}{2}\big(\bSigma_0^{1/2} D_\sigma \bSigma_0^{-1/2} - \sigma^2\bI_d\big),
\end{eqnarray}
then the optimal coupling $\mu^\star$ to (\ref{ot0}) is itself a Gaussian:
\begin{eqnarray}\label{closed}
\mu^\star\sim N\left(\left(\begin{array}{c}\nu_0\\\nu_1\end{array}\right),\left(\begin{array}{cc}\Sigma_0&C_\sigma\\C_\sigma^T&\bSigma_1\end{array}\right)\right).
\end{eqnarray}

We now extend this result to the general linear drift case. Substituting the Gaussian transition kernel (\ref{gkernel}) into (\ref{ot}) gives
\begin{eqnarray}\label{ot2}
\min_{\tilde{\mu}\in\Pi(\tilde{\mu}_0,\mu_1)}\left\{\int\|\bx_1-(\tau(1)\bx_0+\zeta(1))\|^2\mu(d\bx_0,d\bx_1)+2\kappa(1,1)\int\log\mu d\mu\right\}.
\end{eqnarray}
The presence of the linear drift modifies both the transport cost (through the rescaling factor $\tau(1)$ and shift $\zeta(1)$) and the entropy regularization (through the effective variance term $\kappa(1,1)$. To simplify the problem, consider the change of variable $\tilde{\bx}_0=\tau(1)\bx_0+\zeta(1)$. This transformation rescales and shifts the initial distribution, producing a new joint law $\tilde{\mu}$ with marginals $\tilde{\mu}_0\sim N(\tilde{\nu}_0,\tilde{\bSigma}_0))$ and $\mu_1=N(\nu_1,\bSigma_1)$, where $\tilde{\nu}_0=\tau(1)\nu_0+\zeta(1)$, $\tilde{\bSigma}_0=\tau(1)^2\bSigma_0$. Since the transformation is bijective, minimizing over $\mu$ or $\tilde{\mu}$ is equivalent. The first term in (\ref{ot2}) then becomes the standard quadratic cost $\mE\|\bx_1-\tilde{\bx}_0\|^2$,
while the entropy term is invariant up to an additive constant, $\int\log\tilde{\mu}d\tilde{\mu}=\int\log\mu d\mu+\text{const}$. Hence, (\ref{ot2}) is equivalent to
\begin{eqnarray}\label{ot2}
\min_{\tilde{\mu} \in \Pi(\tilde{\mu}_0, \mu_1)}
\left\{
\int \|\bx_1 -  \tilde{\bx}_0\|^2 \tilde{\mu}(d\bx_0,d\bx_1)+2\kappa(1,1) \int \log \tilde{\mu}, d\tilde{\mu}
\right\}.
\end{eqnarray}
Equation (\ref{ot2}) has the same form as (\ref{ot0}), but with the adjusted variance parameter $\tilde{\sigma}^2=\kappa(1,1)$. Therefore, the optimal coupling $\tilde{\mu}^\star$ is again Gaussian:
\begin{eqnarray}\label{closed1}
\tilde{\mu}^\star\sim N\left(\left(\begin{array}{c}\tilde{\nu}_0\\\nu_1\end{array}\right),\left(\begin{array}{cc}\tilde{\Sigma}_0&\tilde{C}_{\tilde{\sigma}}\\\tilde{C}_{\tilde{\sigma}}^T&\bSigma_1\end{array}\right)\right),
\end{eqnarray}
where  
\begin{eqnarray}\label{dc1}
\tilde{D}_{\tilde{\sigma}}=(4\tilde{\bSigma}_0^{1/2}\bSigma_1\tilde{\bSigma}_0^{1/2}+\tilde{\sigma}^4\bI_d)^{1/2},&&\tilde{C}_{\tilde{\sigma}}=\frac{1}{2}(\tilde{\bSigma}_0^{1/2}\tilde{D}_{\tilde{\sigma}}\tilde{\bSigma}_0^{-1/2}-{\tilde{\sigma}}^2\bI_d).
\end{eqnarray}
Finally, applying the inverse transformation $\bx_0=\tau(1)^{-1}\tilde{\bx}_0-\zeta(1)$ maps the optimal coupling back to the original coordinates.
The resulting optimal static Gaussian SB takes the form
\begin{eqnarray}\label{closed1}
\mu^\star\sim N\left(\left(\begin{array}{c}\nu_0\\\nu_1\end{array}\right),\left(\begin{array}{cc}\bSigma_0&C_{\sigma_\star}\\C_{\sigma_\star}^T&\bSigma_1\end{array}\right)\right),
\end{eqnarray}
where the effective regularization scale is $\sigma_\star^2=\kappa(1,1)/\tau(1)$. 

To evaluate (\ref{finalu}), we next integrate over both $\bx_0$ and $\bx_1$. Since $\mu^\star(\bx_0,\bx_1)$ in (\ref{closed1}) is multivariate normal, we can collect all quadratic and linear terms in $\bx_0$ and $\bx_1$, express the integrand as an exponential of a joint quadratic form, and then perform the Gaussian integration analytically.
\begin{eqnarray}\nn
&&[(\bx_0-{\nu}_{0})^T,(\bx_1-{\nu}_{1})^T]\left(\begin{array}{cc}\bSigma_0&C_{\sigma^\star}\\C_{\sigma^\star}^T&\bSigma_1\end{array}\right)^{-1}\left(\begin{array}{c}\bx_0-{\nu}_{0}\\\bx_1-{\nu}_{1}\end{array}\right)\\\nn
&&+\frac{\{\bx-\tau(t)\bx_0-\zeta(t)\}^T\{\bx-\tau(t)\bx_0-\zeta(t)\}}{\kappa(t,t)}\\\nn
&&+\frac{\{\bx_1-\tau_1(t)\bx-\zeta_1(t))\}^T\{\bx_1-\tau_1(t)\bx-\zeta_1(t))\}}{\kappa_1(t,t)}\\
&&-\frac{\{\bx_1-\tau(1)\bx_0-\zeta(1)\}^T\{\bx_1-\tau(1)\bx_0-\zeta(1)\}}{\kappa(1,1)}\\\nn
&=&[(\bx_0-{\nu}_{0})^T,(\bx_1-{\nu}_{1})^T]\left(\begin{array}{cc}\bSigma_0&C_{\sigma^\star}\\C_{\sigma^\star}^T&\bSigma_1\end{array}\right)^{-1}\left(\begin{array}{c}\bx_0-{\nu}_{0}\\\bx_1-{\nu}_{1}\end{array}\right)\\\nn
&&+(\bx_0^T,\bx_1^T)\left(\begin{array}{cc}\left\{\frac{\tau(t)^2}{\kappa(t,t)}-\frac{\tau(1)^2}{\kappa(1,1)}\right\}\bI_d&\frac{\tau(1)}{\kappa(1,1)}\bI_d\\\frac{\tau(1)}{\kappa(1,1)}\bI_d&\left\{\frac{1}{\kappa_1(t,t)}-\frac{1}{\kappa(1,1)}\right\}\bI_d\end{array}\right)\left(\begin{array}{c}\bx_0\\\bx_1\end{array}\right)\\\nn
&&-2\left[\frac{\tau(t)(\bx-\zeta(t))^T}{\kappa(t,t)}+\frac{\tau(1)\zeta(1)}{\kappa(1,1)},\frac{(\tau_1(t)\bx+\zeta_1(t))^T}{\kappa_1(t,t)}-\frac{\zeta(1)}{\kappa(1,1)}\right]\left(\begin{array}{c}\bx_0\\\bx_1\end{array}\right)+\text{const}.
\end{eqnarray}
Therefore, the joint density of $\left(\begin{array}{c}\bx_0\\\bx_1\end{array}\right)$ is a multivariate normal distribution in $\bR^{2d}$, with mean vector given by
\begin{eqnarray}\nn
\left(\begin{array}{c}\bar{\bx}_0\\\bar{\bx}_1\end{array}\right)&=&\left\{\left(\begin{array}{cc}\frac{\tau(t)^2\kappa_1(t,t)}{\kappa(t,t)\kappa(1,1)}\bI_d&\frac{\tau(1)}{\kappa(1,1)}\bI_d\\\frac{\tau(1)}{\kappa(1,1)}\bI_d&\frac{\tau(1)^2\kappa(t,t)}{\tau(t)^2\kappa_1(t,t)\kappa(1,1)}\bI_d\end{array}\right)+\left(\begin{array}{cc}\bSigma_0&C_{\sigma^\star}\\C_{\sigma^\star}^T&\bSigma_1\end{array}\right)^{-1}\right\}^{-1}\\\nn
&&\left[\left(\begin{array}{cc}\bSigma_0&C_{\sigma^\star}\\C_{\sigma^\star}^T&\bSigma_1\end{array}\right)^{-1}\left(\begin{array}{c}{\nu}_{0}\\{\nu}_{1}\end{array}\right)+\left(\begin{array}{c}\frac{\tau(t)(\bx-\zeta(t))}{\kappa(t,t)}+\frac{\tau(1)\zeta(1)}{\kappa(1,1)}\\\frac{\tau_1(t)\bx+\zeta_1(t)}{\kappa_1(t,t)}-\frac{\zeta(1)}{\kappa(1,1)}\end{array}\right)\right]\\\nn
&=&\left\{\bI_d-A^{-1}\bu(\bI_d+\bv^TA^{-1}\bu)^{-1}\bv^T\right\}\\\label{jointm}
&&\left[\left(\begin{array}{c}{\nu}_{0}\\{\nu}_{1}\end{array}\right)+A^{-1}\bu\otimes\left\{\bx-\zeta(t)+\frac{\tau(1)\kappa(t,t)}{\tau(t)\kappa(1,1)}\zeta(1)\right\}\right],
\end{eqnarray}
where 
\begin{eqnarray}\nn
\bu=\left(\begin{array}{c}\frac{\tau(t)}{\kappa(t,t)}\\\frac{\tau(1)}{\tau(t)\kappa_1(t,t)}\end{array}\right),~~
\bv=\left(\begin{array}{c}\frac{\tau(t)\kappa_1(t,t)}{\kappa(1,1)}\\\frac{\tau(1)\kappa(t,t)}{\tau(t)\kappa(1,1)}\end{array}\right),~~
A=\left(\begin{array}{cc}\bSigma_0&C_{\sigma^\star}\\C_{\sigma^\star}^T&\bSigma_1\end{array}\right)^{-1},
\end{eqnarray}
and we have used the matrix inversion identity stating that if $A$ is invertible and $\bu,\bv$ re column vectors satisfying $1+\bv^TA^{-1}\bu\ne 0$, then
\begin{eqnarray}\nn
(A+\bu\bv^T)^{-1}=A^{-1}-A^{-1}\bu(1+\bv^TA^{-1}\bu)^{-1}\bv^T A^{-1}.
\end{eqnarray}
Define
\begin{eqnarray}\nn
\bSigma_t=\bI_d+\bv^TA^{-1}\bu=\bI_d+\frac{\tau(t)^2\kappa_1(t,t)}{\kappa(t,t)\kappa(1,1)}\bSigma_0+\frac{\tau(1)^2\kappa(t,t)}{\tau(t)^2\kappa_1(t,t)\kappa(1,1)}\bSigma_1+\frac{\tau(1)}{\kappa(1,1)}(C_{\sigma^\star}+C_{\sigma^\star}^T),
\end{eqnarray}
which differs slightly—up to a constant multiplicative factor—from the analogous notation used in \citet{bunne2023schrodingerbridgegaussianmeasures}, namely,
\begin{eqnarray}\nn
&&\frac{\kappa(t,t)\kappa_1(t,t)}{\kappa(1,1)}\bI_d+\frac{\tau(t)^2\kappa_1(t,t)^2}{\kappa(1,1)^2}\bSigma_0+\frac{\tau(1)^2\kappa(t,t)^2}{\tau(t)^2\kappa(1,1)^2}\bSigma_1+\frac{\tau(1)\kappa(t,t)\kappa_1(t,t)}{\kappa(1,1)^2}(C_{\sigma^\star}+C_{\sigma^\star}^T)\\\nn
&=&\kappa(t,t)(1-\rho(t))\bI_d+\bar{r}(t)^2\bSigma_0+r(t)^2\bSigma_1+r(t)\bar{r}(t)(C_{\sigma^\star}+C_{\sigma^\star}^T).
\end{eqnarray}
We can write (\ref{jointm}) as
\begin{eqnarray}\nn
\left(\begin{array}{c}\bar{\bx}_0\\\bar{\bx}_1\end{array}\right)&=&\left(\begin{array}{c}{\bnu}_{0}\\{\bnu}_{1}\end{array}\right)-\left(\begin{array}{cc}\bSigma_0&C_{\sigma^\star}\\C_{\sigma^\star}^T&\bSigma_1\end{array}\right)\left(\begin{array}{cc}\frac{\tau(t)^2\kappa_1(t,t)}{\kappa(t,t)\kappa(1,1)}\bI_d&\frac{\tau(1)}{\kappa(1,1)}\bI_d\\\frac{\tau(1)}{\kappa(1,1)}\bI_d&\frac{\tau(1)^2\kappa(t,t)}{\tau(t)^2\kappa_1(t,t)\kappa(1,1)}\bI_d\end{array}\right)\bSigma_t^{-1}\left(\begin{array}{c}{\bnu}_{0}\\{\bnu}_{1}\end{array}\right)\\\nn,
&&+\left(\begin{array}{cc}\bSigma_0&C_{\sigma^\star}\\C_{\sigma^\star}^T&\bSigma_1\end{array}\right)\left(\begin{array}{c}\frac{\tau(t)}{\kappa(t,t)}\bI_d\\\frac{\tau(1)}{\tau(t)\kappa_1(t,t)}\bI_d\end{array}\right)\bSigma_t^{-1}\left\{\bx-\zeta(t)+\frac{\tau(1)\kappa(t,t)}{\tau(t)\kappa(1,1)}\zeta(1)\right\}.
\end{eqnarray}

Further denote
\begin{eqnarray}\nn
r(t)=\frac{\tau(1)\kappa(t,t)}{\tau(t)\kappa(1,1)}&&\bar{r}(t)=\frac{\tau(t)\kappa_1(t,t)}{\kappa(1,1)},\\\nn
\dot{r}(t)=c(t)r(t)+\frac{\sigma(t)^2\tau(1)}{\tau(t)\kappa(1,1)}&&\dot{\bar{r}}(t)=c(t)\bar{r}(t)-\frac{\sigma(t)^2\tau(1)^2}{\tau(t)\kappa(1,1)},\\\nn
\dot{\zeta}(t)=c(t)\zeta(t)+\alpha(t)&&
\end{eqnarray}
In (\ref{finalu}), we only need to know $\bar{\bx}_1$ which is 
\begin{eqnarray}\nn
\bar{\bx}_1&=&\bnu_{1}-\left(\begin{array}{cc}\frac{\tau(t)^2\kappa_1(t,t)}{\kappa(t,t)\kappa(1,1)}C_{\sigma^\star}^T+\frac{\tau(1)}{\kappa(1,1)}\bSigma_1&\frac{\tau(1)}{\kappa(1,1)}C_{\sigma^\star}^T+\frac{\tau(1)^2\kappa(t,t)}{\tau(t)^2\kappa_1(t,t)\kappa(1,1)}\bSigma_1\end{array}\right)\bSigma_t^{-1}\left(\begin{array}{c}{\bnu}_{0}\\{\bnu}_{1}\end{array}\right)\\\nn,
&&+\left(\frac{\tau(t)}{\kappa(t,t)}C_{\sigma^\star}^T+\frac{\tau(1)}{\tau(t)\kappa_1(t,t)}\bSigma_1\right)\bSigma_t^{-1}\left\{\bx-\zeta(t)+\frac{\tau(1)\kappa(t,t)}{\tau(t)\kappa(1,1)}\zeta(1)\right\}\\\nn
&=&\bnu_{1}-\left(\frac{\tau(t)^2\kappa_1(t,t)}{\kappa(t,t)\kappa(1,1)}C_{\sigma^\star}^T-\frac{\tau(1)}{\kappa(1,1)}\bSigma_1\right)\bSigma_t^{-1}\bnu_{0}+\left(\frac{\tau(1)}{\kappa(1,1)}C_{\sigma^\star}^T+\frac{\tau(1)^2\kappa(t,t)}{\tau(t)^2\kappa_1(t,t)\kappa(1,1)}\bSigma_1\right)\bSigma_t^{-1}\bnu_{1}\\\nn,
&&+\left(\frac{\tau(t)}{\kappa(t,t)}C_{\sigma^\star}^T+\frac{\tau(1)}{\tau(t)\kappa_1(t,t)}\bSigma_1\right)\bSigma_t^{-1}\left\{\bx-\zeta(t)+\frac{\tau(1)\kappa(t,t)}{\tau(t)\kappa(1,1)}\zeta(1)\right\}\\\nn
&=&\bnu_{1}-\left(\bar{r}(t)^2C_{\sigma^\star}^T+r(t)\bar{r}(t)\bSigma_1\right)\bSigma_t^{-1}\bnu_{0}-\left(r(t)\bar{r}(t)C_{\sigma^\star}^T+r(t)^2\bSigma_1\right)\bSigma_t^{-1}\bnu_{1}\\\nn,
&&+\left(\bar{r}(t)C_{\sigma^\star}^T+r(t)\bSigma_1\right)\bSigma_t^{-1}\left\{\bx-\zeta(t)+r(t)\zeta(1)\right\}.
\end{eqnarray}
Substitute into (\ref{cscore}) and (\ref{finalu}), we have
\begin{eqnarray}\nn
\bu^\star(\bx,t)&=&\sigma(t)^2\frac{\tau_1(t)\{\bar{\bx}_1-\tau_1(t)\bx-\zeta_1(t))\}}{\kappa_1(t,t)}\\\nn
&=&\sigma(t)^2\frac{\tau_1(t)}{\kappa_1(t,t)}\{\bar{\bx}_1-\tau_1(t)[\bx-\zeta(t)]-\zeta(1)\}.
\end{eqnarray}
According to (\ref{drift0}), the final drift is
\begin{eqnarray}\nn
&&c(t)\bx+\alpha(t)+\sigma(t)^2\frac{\tau_1(t)}{\kappa_1(t,t)}\{\bar{\bx}_1-\tau_1(t)[\bx-\zeta(t)]-\zeta(1)\}\\\nn
&=&c(t)\bx+\alpha(t)+\sigma(t)^2\frac{\tau_1(t)}{\kappa_1(t,t)}[\left(r(t)\bSigma_1+\bar{r}(t)C_{\sigma^\star}^T\right)\bSigma_t^{-1}\left\{\bx-\zeta(t)+r(t)\zeta(1)\right\}\\\nn
&&+\bnu_{1}-\left(\bar{r}(t)^2C_{\sigma^\star}^T+r(t)\bar{r}(t)\bSigma_1\right)\bSigma_t^{-1}\bnu_{0}-\left(r(t)\bar{r}(t)C_{\sigma^\star}^T+r(t)^2\bSigma_1\right)\bSigma_t^{-1}\bnu_{1}\\\nn
&&-\tau_1(t)[\bx-\zeta(t)]-\zeta(1)]\\\nn
&=&c(t)\bx+\alpha(t)+\sigma(t)^2\frac{\tau_1(t)}{\kappa_1(t,t)}[\left(r(t)\bSigma_1+\bar{r}(t)C_{\sigma^\star}^T\right)\bSigma_t^{-1}\\\nn
&&\left\{\bx-\bar{r}(t)\bnu_0-r(t)\bnu_1-\zeta(t)+r(t)\zeta(1)\right\}+\bnu_{1}-\tau_1(t)[\bx-\zeta(t)]-\zeta(1)]\\\nn
&=&c(t)\bx+\alpha(t)+\sigma(t)^2\frac{\tau_1(t)}{\kappa_1(t,t)}[\left(r(t)\bSigma_1+\bar{r}(t)C_{\sigma^\star}^T\right)\bSigma_t^{-1}-\tau_1(t)\bI_d]\\\nn
&&\left\{\bx-\bar{r}(t)\bnu_0-r(t)\bnu_1-\zeta(t)+r(t)\zeta(1)\right\}+\sigma(t)^2\frac{\tau_1(t)}{\kappa_1(t,t)}[(\bnu_{1}-\zeta(1))(1-\tau_1(t)r(t))-\tau_1(t)\bar{r}(t)\bnu_0]\\\nn
&=&\sigma(t)^2\left(\frac{\tau(1)}{\tau(t)\kappa(1,1)}[r(t)\bSigma_1+\bar{r}(t)C_{\sigma^\star}^T]-\frac{\tau(1)^2}{\tau(t)\kappa(1,1)}[\bar{r}(t)\bSigma_0+r(t)C_{\sigma^\star}]-\rho(t)\bI_d\right)\\\nn
&&\bSigma_t^{-1}\left\{\bx-\bar{r}(t)\bnu_0-r(t)\bnu_1-\zeta(t)+r(t)\zeta(1)\right\}\\\label{normaldrift}
&&+\frac{\sigma(t)^2\tau(1)}{\tau(t)\kappa(1,1)}(\bnu_1-\zeta(1))-\frac{\sigma(t)^2\tau(1)^2}{\tau(t)\kappa(1,1)}\bnu_0+c(t)\bx+\alpha(t).
\end{eqnarray}
To facilitate comparison with the well-known result in \citet{bunne2023schrodingerbridgegaussianmeasures}, we define
\begin{eqnarray}\nn
S(t)&=&\dot{r}(t)r(t)\bSigma_1+\dot{r}(t)\bar{r}(t)C_{\sigma^\star}+\dot{\bar{r}}(t)[\bar{r}(t)\bSigma_0+r(t)C_{\sigma^\star}^T]+[c(t)\kappa(t,t)(1-\rho(t))-\sigma(t)^2\rho(t)]\bI_d\\\nn
&=&c(t)\bSigma_t+\sigma(t)^2\left(\frac{\tau(1)}{\tau(t)\kappa(1,1)}[r(t)\bSigma_1+\bar{r}(t)C_{\sigma^\star}]-\frac{\tau(1)^2}{\tau(t)\kappa(1,1)}[\bar{r}(t)\bSigma_0+r(t)C_{\sigma^\star}^T]-\rho(t)\bI_d\right).
\end{eqnarray}
With this notation, equation (29) in Theorem 3 of \citet{bunne2023schrodingerbridgegaussianmeasures} can be equivalently written as
\begin{eqnarray}\nn
&&S(t)^T\bSigma_t^{-1}\left\{\bx-\bar{r}(t)\bnu_0-r(t)\bnu_1-\zeta(t)+r(t)\zeta(1)\right\}+\dot{\bar{r}}(t)\bnu_0+\dot{r}(t)\bnu_1+c(t)\zeta(t)+\alpha(t)-\dot{r}(t)\zeta(1)\\\nn
&=&\sigma(t)^2\left(\frac{\tau(1)}{\tau(t)\kappa(1,1)}[r(t)\bSigma_1+\bar{r}(t)C_{\sigma^\star}^T]-\frac{\tau(1)^2}{\tau(t)\kappa(1,1)}[\bar{r}(t)\bSigma_0+r(t)C_{\sigma^\star}]-\rho(t)\bI_d\right)\\\nn
&&\bSigma_t^{-1}\left\{\bx-\bar{r}(t)\bnu_0-r(t)\bnu_1-\zeta(t)+r(t)\zeta(1)\right\}\\\nn
&&+\frac{\sigma(t)^2\tau(1)}{\tau(t)\kappa(1,1)}(\bnu_1-\zeta(1))-\frac{\sigma(t)^2\tau(1)^2}{\tau(t)\kappa(1,1)}\bnu_0+c(t)\bx+\alpha(t),
\end{eqnarray}
which coincides exactly with (\ref{normaldrift}). It is worth noting that Theorem 3 in \citet{bunne2023schrodingerbridgegaussianmeasures} establishes this result using more advanced mathematical machinery—specifically, the central identity from quantum field theory and the infinitesimal generator framework. In contrast, our derivation arrives at the same expression through direct and elementary manipulations of the multivariate normal density, offering a more transparent and computationally accessible perspective.

\section{Proof of Theorem \ref{thm1}}
For the SDE in (\ref{sde}) with reference process given by (\ref{sde0}), we first note that
\begin{eqnarray}\nn
\bu^\star(\bx,t) = \mE_{\bQ_{\bx_0,\bx_1|\bx_t=\bx}}\left[\sigma(t)^2\bs_{\bx_1}(t,\bx_t)\right],
\end{eqnarray}
as shown in (\ref{ualt}). Now, for any vector-valued function $\bu(\bx, t): \mathbb{R}^d \times [0,1] \to \mathbb{R}^d$, consider the following risk functional:
\begin{eqnarray}\nn
&&\mE_{\bQ_{\bx_0,\bx_1,\bx_t}}\left[\left|\sigma(t)^2\bs_{\bx_1}(t,\bx_t)-\bu(\bx_t,t)\right|^2\right]\\\nn
&=&\mE_{\bQ_{\bx_0,\bx_1,\bx_t}}\left[\left|\sigma(t)^2\bs_{\bx_1}(t,\bx_t)-\bu^\star(\bx_t,t)+\bu^\star(\bx_t,t)-\bu(\bx_t,t)\right|^2\right]\\\nn
&=&\mE_{\bQ_{\bx_0,\bx_1,\bx_t}}\left[\left|\sigma(t)^2\bs_{\bx_1}(t,\bx_t)-\bu^\star(\bx_t,t)\right|^2\right]
+\mE_{\bQ_{\bx_0,\bx_1,\bx_t}}\left[\left|\bu^\star(\bx_t,t)-\bu(\bx_t,t)\right|^2\right]\\\nn
&&+2,\mE_{\bQ_{\bx_0,\bx_1,\bx_t}}\left[\left\langle\sigma(t)^2\bs_{\bx_1}(t,\bx_t)-\bu^\star(\bx_t,t),,\bu^\star(\bx_t,t)-\bu(\bx_t,t)\right\rangle\right]\\\label{ineq}
&\ge&\mE_{\bQ_{\bx_0,\bx_1,\bx_t}}\left[\left|\sigma(t)^2\bs_{\bx_1}(t,\bx_t)-\bu^\star(\bx_t,t)\right|^2\right].
\end{eqnarray}
To justify the inequality in (\ref{ineq}), note that the inner product term vanishes:
\begin{eqnarray}\nn
&&\mE_{\bQ_{\bx_0,\bx_1,\bx_t}}\left[\left\langle\sigma(t)^2\bs_{\bx_1}(t,\bx_t)-\bu^\star(\bx_t,t),\bu^\star(\bx_t,t)-\bu(\bx_t,t)\right\rangle\right]\\\nn
&=&\mE_{\bQ_{\bx_t}}\left[\mE_{\bQ_{\bx_0,\bx_1|\bx_t}}\left\langle\sigma(t)^2\bs_{\bx_1}(t,\bx_t)-\bu^\star(\bx_t,t),\bu^\star(\bx_t,t)-\bu(\bx_t,t)\right\rangle\right]\\\nn
&=&\mE_{\bQ_{\bx_t}}\left[\left\langle\mE_{\bQ_{\bx_0,\bx_1|\bx_t}}\left[\sigma(t)^2\bs_{\bx_1}(t,\bx_t)-\bu^\star(\bx_t,t)\right],\bu^\star(\bx_t,t)-\bu(\bx_t,t)\right\rangle\right]\\\nn
&=&\mE_{\bQ_{\bx_t}}\left[\left\langle\bu^\star(\bx_t,t)-\bu^\star(\bx_t,t),\bu^\star(\bx_t,t)-\bu(\bx_t,t)\right\rangle\right]=0,
\end{eqnarray}
where we used the identity $\sigma(t)^2\mathbb{E}{\bQ{\bx_0,\bx_1 | \bx_t}} [\bs_{\bx_1}(t, \bx_t)] = \bu^\star(\bx_t, t)$ from (\ref{ualt}).
Therefore, it follows from (\ref{ineq}) that
\begin{eqnarray}\nn
\mE_{\bQ_{\bx_0,\bx_1,\bx_t}}\left[\left|\sigma(t)^2\bs_{\bx_1}(t,\bx_t)-\bu(\bx_t,t)\right|^2\right]
\ge
\mE_{\bQ_{\bx_0,\bx_1,\bx_t}}\left[\left|\sigma(t)^2\bs_{\bx_1}(t,\bx_t)-\bu^\star(\bx_t,t)\right|^2\right],
\end{eqnarray}
with equality if and only if $\bu(\bx_t, t) = \bu^\star(\bx_t, t)$.
Hence, $\bu^\star(\bx_t, t)$ uniquely minimizes the objective in (\ref{lse01}), completing the proof of Theorem~\ref{thm1}.

\bibliographystyle{chicago}
\bibliography{biblist}

\end{document}